\def\draft{0}  
\newcommand{\Rnote}[1]{\begin{framed}\noindent \textcolor{red}{{#1}}\end{framed}} 
\newcommand{\Rnote}[1]{}
\newcommand{\remove}[1]{}
\newtheorem{theorem}{Theorem}
\newtheorem{example}{Example}
\newtheorem{assumption}{Assumption}
\newtheorem{claim}{Claim}
\newtheorem{lemma}{Lemma}
\newtheorem{corollary}{Corollary}
\newtheorem{definition}{Definition}
\newtheorem{proposition}{Proposition}
\newtheorem{remk}[theorem]{Remark}
\def\FullBox{\hbox{\vrule width 8pt height 8pt depth 0pt}}
\def\qed{\ifmmode\qquad\FullBox\else{\unskip\nobreak\hfil
\penalty50\hskip1em\null\nobreak\hfil\FullBox
\parfillskip=0pt\finalhyphendemerits=0\endgraf}\fi}
\def\qedsketch{\ifmmode\Box\else{\unskip\nobreak\hfil
\penalty50\hskip1em\null\nobreak\hfil$\Box$
\parfillskip=0pt\finalhyphendemerits=0\endgraf}\fi}
\newenvironment{proof}{\begin{trivlist} \item {\bf Proof:~~}}
  {\qed\end{trivlist}}
\newcommand{\beq}{\begin{equation}}
\newcommand{\eeq}{\end{equation}}
\newcommand{\be}{\begin{enumerate}}
\newcommand{\ee}{\end{enumerate}}
\newcommand{\bi}{\begin{itemize}}
\newcommand{\ei}{\end{itemize}}
\newcommand{\bd}{\begin{description}}
\newcommand{\ed}{\end{description}}
\newcommand{\bc}{\begin{center}}
\newcommand{\ec}{\end{center}}
\newcommand{\bthm}{\begin{theorem}}
\newcommand{\ethm}{\end{theorem}}
\newcommand{\bdefi}{\begin{definition}}
\newcommand{\edefi}{\end{definition}}
\newcommand{\bcor}{\begin{corollary}}
\newcommand{\ecor}{\end{corollary}}
\newcommand{\blem}{\begin{lemma}}
\newcommand{\elem}{\end{lemma}}
\newcommand{\bexa}{\begin{example}}
\newcommand{\eexa}{\end{example}}
\newcommand{\bprop}{\begin{proposition}}
\newcommand{\eprop}{\end{proposition}}
\newcommand{\ronote}[1]{\begin{framed}\noindent \textcolor{red}{{Ronen's note: #1}}\end{framed}} 
\newcommand{\ranote}[1]{\begin{framed}\noindent \textcolor{red}{{Rann's note: #1}}\end{framed}} 
\newcommand{\ranote}[1]{}
\newcommand{\ronote}[1]{}
\newcommand{\eqdef}{\mathbin{\stackrel{\rm def}{=}}}
\def\real{\hbox{\rm\setbox1=\hbox{I}\copy1\kern-.45\wd1 R}}
\def\neal{\hbox{\rm\setbox1=\hbox{I}\copy1\kern-.45\wd1 N}}
\newcommand{\abs}[1]{\left|{#1}\right|}
\newcommand{\eps}{\varepsilon}
\newcommand{\ot}{\overline{t}}
\def\L{{\cal L}}
\def\M{{\cal M}}
\newcommand{\R}{{\mathbb R}}
\newcommand{\supp}{\mathrm{supp}}
\newcommand{\br}{\mathrm{BR}}
\begin{document}

\definecolor{myblue}{RGB}{80,80,160}
\definecolor{mygreen}{RGB}{80,160,80}

\begin{titlepage}
\title{Perception Games and Privacy%
\thanks{Gradwohl gratefully acknowledges the support of NSF award \#1216006. Smorodinsky gratefully acknowledges the support of ISF grant 2016301, the joint Microsoft-Technion e-Commerce Lab, Technion VPR grants and the Bernard M. Gordon Center for Systems Engineering at the Technion. We also thank Eddie Dekel, Jana Friedrichsen, Ehud Kalai, Gil Kalai, Birendra Rai, Juuso Valimaki and three anonymous reviewers for insightful comments.}}

\author{Ronen Gradwohl\thanks{Kellogg School of Management, Northwestern University, Evanston, IL 60208, USA. Email: \texttt{r-gradwohl@kellogg.northwestern.edu}.}
\and
Rann Smorodinsky\thanks{Faculty of Industrial Engineering and Management, The Technion -- Israel Institute of
Technology. Email: \texttt{rann@ie.technion.ac.il.}}}
\date{}
\maketitle

\begin{abstract}
Players have privacy concerns that may affect their choice of actions in strategic settings. We use a variant of signaling games to model this effect and study its relation to pooling behavior, misrepresentation of information, and inefficiency.
\end{abstract}

\noindent \textbf{JEL Classification:} C72, D82\\
\noindent \textbf{Keywords:} Privacy, perception games, signaling games
\thispagestyle{empty}

\end{titlepage}

\renewcommand{\thefootnote}{\arabic{footnote}}
\setcounter{footnote}{0}

\section{Introduction}
The concern for privacy has been a part of the popular debate over the past century and, with the prevalence of the Internet, even more so over the last decade. The relentless tracking of our activities, especially those online, puts forward a variety of new social dilemmas, concerns, and challenges, most of which fall under the general theme of privacy.
In this paper we single out one such challenge: to understand the effects of individuals' privacy concerns on decision making.
Can such concerns affect the choices made? Does the lack of privacy necessarily induce pooling behavior? And does this result in a loss of social efficiency?
To illustrate some of the issues involved, consider the following {\em stylized} example:%
\footnote{This example is inspired by Example 3.1 in \cite{nissim2012privacy}. Example \ref{Example1revisited} in the sequel is a rigorous reformulation of this example.}

\begin{example}\label{example:Coventry}
Alice holds one of two political views, denoted $\ell(eft)$ and $r(ight)$, and can subscribe to one (and only one) of two political blogs, denoted $L$ and $R$. Assuming Alice suspects no one is following her, she prefers to follow the blog that corresponds to her political views. We calibrate utilities so that reading the ``correct'' blog is worth 1, while reading the other blog is worth 0. In addition, Alice wants to maintain her political views concealed but is concerned that what she reads online may be monitored. This concern is captured by a disutility of 1 in case Alice's political views are completely exposed and a disutility of 0 if they are not exposed at all. Which blog should Alice follow?%
\footnote{One possible interpretation and story for Alice's concerns is that she does not want her future boss to know that she is liberal-leaning due to the possible conservative nature of her future boss, or vice-versa. Given the ambiguity the future holds she prefers that nothing is learned about her. This interpretation suggests that privacy concerns are a reduced-form model for some ambiguous future interaction.}
\end{example}

Observe the following two claims, which are independent of any reasonable definition of `privacy': (1) If Alice has a deterministic strategy that separates her two types (either she is always truthful or always lies), then her type can be fully deduced from her actions. As per our assumptions, this results in a non-positive utility for either types (note the additional disutility of 1 due to privacy concerns). (2) If Alice performs the same action for both types (possibly mixed), then nothing can be inferred from observing the blog she reads and thus her privacy is not jeopardized and no privacy-related disutility is incurred.

Is being truthful optimal? Being truthful leads to an award of 1 from the optimal action, but complete revelation of her political views and thus an additional disutility of 1. The total utility is therefore 0. On the other hand, any mixed type-independent strategy induces no privacy concerns and therefore leads to a positive expected utility for both types. Hence, it is superior to truthfulness. However, any such random behavior cannot be optimal as in the interim stage, once Alice forms her political views, deviation to the corresponding optimal blog is both undetectable and profitable.

 Through a formal reformulation of this example (see Example \ref{Example1revisited})  we later show that there are three equilibria in the model, of which truthfulness is actually one. The other two are pooling equilibria where both types pool on one of the blogs.

In the example above the individual is concerned with what others can learn about her type upon observing the action. This is common to a variety of settings where the action we take may inform others about who we are, what we like and dislike, what our weaknesses and strengths are, and so on. Thus, in choosing an action in a world where our actions are tracked, we may account for
what others might learn about us from observing the chosen action. Put differently, beyond the `material' implication of choosing an action, which is standard to utilitarian and game theoretic analysis, there is an extra `privacy' consideration from viewing actions as signals that inform others about the type of the individual.
This is the aspect of privacy we study in this paper.

To do so we consider a framework that accounts for privacy concerns by weaving it into the utility function, the notion of a strategy and the solution concept: First, an agent's utility function is comprised of a non-standard argument---a distribution over his type set---representing what others think of him. Second, strategies in our model are pairs composed of type dependent actions and perceptions, where the latter is a belief associated with each action. In other words, each type associates to every action a belief over the type set which he ascribes to other players. Finally, in equilibrium, perceptions must comply with rationality assumptions.

%
%
%
By adding particular structure to the extended utility function  we can use this new framework to capture privacy concerns. Admittedly, what exactly is meant by `privacy concerns' is quite elusive. To cope with this ambiguity we advocate a minimalist approach whereby we draw conclusions while assuming a minimal structure required for capturing privacy. In this vein, we introduce two alternative ways for capturing privacy:
\begin{itemize}
\item
A decision maker is {\em anonymity-seeking} if the optimal ex-post belief of others is equal to their ex-ante belief. In other words, any leakage of information is harmful.
\item
A decision maker is {\em identification-averse} if full revelation of his type is the worst possible outcome. In other words, maximal leakage of information is most harmful, privacy-wise.
\end{itemize}

With a model of a decision maker at hand we proceed to the analysis of what comprises an optimal action (subject to rationality arguments). Our first attempt at this considers a
setting with a single decision maker (DM)  who makes a decision in the presence of a passive observer we refer to as Big Brother (BB). The action taken by DM serves also as a signal to BB about the type of DM, and so we arrive at a model that is a variant of signaling games.%
\footnote{In the sequel, and especially in Appendix~\ref{sec:signaling games analogy}, we discuss the nuances that separate our model from the literature on signaling games.}
As in signaling games there is an inherent complexity in solving the DM's optimization problem due to the
 cross-dependence between actions and perceptions. Taking a certain action induces a certain perception, for which the given action might not be optimal. However, resorting to the new optimal action will change the induced perception, which may, in turn, render the new action inferior. To capture this interdependence, we consider a variant of signaling games that we call \textit{perception games}. A solution to a perception game, rather than being some optimal action, is an equilibrium notion composed of strategies and perceptions, where optimality and Bayesian consistency both play a role: given the strategy, the perception must be Bayesian-consistent with the prior subjective distribution; given the perception, the strategy must maximize expected utility.


\paragraph{Our contribution}


The main take-home message is in studying the implications of privacy concerns on action choice. The most intriguing question relates to the potential pooling behavior this induces. Thus, for both aforementioned notions of privacy concerns we introduce necessary and sufficient conditions for pooling behavior. The conditions are somewhat intuitive and are associated with the existence of a common undominated action---an action for which no other action is preferred, independently of perception. Thus, pooling prevails under a loose definition of privacy concerns.

Does this pooling phenomenon imply that the analysis of a decision problem is sensitive to the assumption on privacy concerns? In other words, if privacy concerns are ignored will we arrive at similar conclusions regarding the DM's choice of action? Framing this question from the perspective of revealed preferences we ask whether observing non-pooling behavior implies the absence of privacy concerns. By adding some additional structure on preferences we show that this conclusion is false.
In fact, fully-revealing behavior may be a result of a ``tyranny of the majority", where the coexistence of a minority with privacy concerns and a majority that has no such concerns ``forces" the former to nevertheless fully reveal their private information.
This is particularly relevant
in light of recent claims made in the public debate over privacy that, given how much of their private information people voluntarily share over social networks, they no longer care about privacy
 (see Section~\ref{sec:revealed-prefs}).

Finally, we consider the question of whether privacy legislation---in the form of shielding individual actions from public view---is beneficial to social
welfare. A longstanding critique by economists of such legislation is the potential damage it causes in terms of social welfare  \citep[e.g.,][]{posner1978economic}. A typical example is that hiding credit history necessarily leads to inaccurate risk assessment of lenders and so risky lenders could  face (relatively) low interest loans while confident lenders face high interest rates. Similarly, privacy legislation of past employment can lead to inefficient matching of employees and firms. However, this critique ignores the welfare implications derived
from the actual maintenance of privacy. In contrast, we put forward the intuition that if individuals have privacy concerns then enforcing the secrecy of their actions will lead to an increase
in utility. As the privacy critique mentioned before is a result of externalities (maintaining one's privacy may jeopardize someone else's utility) it follows that with only one individual privacy legislation increases welfare.  However, with more than a single player this is no longer true. We show by example that already with two individuals it is possible
that a \textit{lack} of secrecy actually leads to higher welfare for all individuals, even when those individuals have privacy concerns.


Overall, we view our contribution as twofold. On the one hand we introduce of abstract model and solution concept conducive to the study of privacy, and on the other we discuss
privacy and pooling.

\paragraph{Related literature}
There is a growing literature on privacy in various academic disciplines, such as law, philosophy, and computer science---see the recent surveys of \cite{solove2006brief,solove2011nothing}, \cite{nissenbaum2009privacy}, and  \cite{dwork2010differential}, respectively. Most of this literature does not treat the behavioral implications of privacy but rather discusses the pros and cons of privacy regulation.
Similarly, work on privacy in economics, much of which is surveyed in \cite{hui2006economics} and \cite{acquisti2015economics}, mostly examines the policy question of how the revelation of individuals' private information
affects welfare. In particular, privacy is related to knowledge and privacy regulations allows the information holder the advantage of leveraging the information to his best interest. In all these models information available to others is used in a very concrete way to the benefit of the other, and possibly used against the information holder whose privacy is analyzed. The players in such models do not have an inherent concern about privacy but rather view it as a tactical means for improving their utility.  In addition, the individuals have a clear understanding of the future interaction where such information could be used.

This paper, in contrast, examines privacy through the lens of an individual decision maker whose concern for privacy is somewhat different. The concern is either intrinsic, or, as in the above models, it is instrumental, but with the difference that the decision maker does not have a clear understanding of how the use of this information will play out in the future. To this end, our model accounts for privacy directly in the utility function, and studies the implications of this (we elaborate on this further in Section \ref{subsection percpetion vs esteem}). As such, this paper is methodologically closely related to various strands of the game theory literature and, in particular, to work on signaling games, psychological games, and social image.

In a signaling game \citep[see, e.g.,][]{sobel2009signaling}, an informed player sends a message to an uninformed player, the latter performs an action, and both obtain a reward that is a function of the information, the message, and the action.
Single-player perception games are closely related, except that the uninformed player takes no action but rather forms a belief that is Bayesian-consistent with the message and strategy of the informed player. However, if one replaces the belief formation of the uninformed player in our perception game with a strategic player who is asked to provide a prediction over the information held by the informed player and will be rewarded according to a proper scoring rule \citep[see][]{brier1950verification}, then one is back at a signaling game with a proper mapping of the equilibria.\footnote{The differences between perception games and signaling games are discussed in Appendix~\ref{sec:signaling games analogy}.
One main difference is that the latter typically assumes that there is some ideal type of DM whom all wish to emulate, whereas this does not hold in the former.}

Psychological games \citep{gilboa1988information,geanakoplos1989psychological,battigalli2009dynamic} are games in which the utilities of players depend not only on all players' actions,
but also on their beliefs about others' strategy profiles, as well as beliefs over such beliefs, and so on. Psychological games have been used to model emotions such as surprise, embarrassment, and guilt, among others.
Our paper is similar in that our model involves a decision maker whose utility depends on more than his action. Unlike psychological games, our model is rooted in a setting of incomplete information, and our focus is on the effect of privacy concerns.

Of course, the connection between an individual's choice of action and the resulting inferences about the individual's fundamental traits goes beyond privacy, and
there is a vast literature that examines these other connections.
Research on conspicuous goods, for example, acknowledges that such goods provide a dual benefit---a direct consumption effect as well as an indirect effect due to the signal the consumption of these goods send (e.g., an
individual enjoys driving his new hybrid Toyota Prius, as it is a quality car and, in addition, he is perceived by others as environmentally friendly, which he is happy about). The literature on conformity similarly studies situations where persistent norms emerge, as opposed to transient ones (fads), in a society where perceptions matter. Finally, the literature on self-image also connects action and perception.%
\footnote{A very partial and
incomplete list of relevant references includes  \cite{bernheim1994theory}, \cite{glazer1996signaling}, and
\cite{ireland1994limiting} on conformity, charity, and status, respectively; \cite{benabou2006incentives} on pro-sociality; \cite{becker1974theory} and, more recently, \cite{rayo2013monopolistic} and \cite{friedrichsen2013image} on self-image. See the references therein for further related
literature.}
One particular paper which connects esteem with privacy policies (a policy that hides agents' actions) is that of \cite{daughety2010public}.
\citeauthor{daughety2010public}'s agent has utility over four terms: type, a consumption good, her level
of provision of a public good (or bad), and an esteem term. The esteem term is updated
when other agents observe the active agent's level of provision of the public good (or bad).

In the context of online advertising \cite{de2016online} study the implications of imposing a privacy policy in the market for targeted advertising. This is done by comparing the market outcome in two scenarios: with and without imposing a privacy policy. In that work, however, the individuals whose privacy is protected play no strategic role, whereas in out work we focus on the strategic implications of privacy concerns.

Privacy concerns are quite different from esteem and issues of self image in that the latter two typically involve an ordered set of types where all agents would like to be perceived as `high' types. In particular, agents that have high types prefer that their type be known. In contrast, our work focuses on privacy concerns where  agents of all types appreciate the anonymity of taking an action, assuming all else if equal. We elaborate further of the differences between social image modeling and privacy concerns in Section \ref{subsection percpetion vs esteem}.

Repeated games provide a tool for studying privacy policies. One can study and compare the equilibrium implications between two settings, when actions at early stages are observed and used to determine players' action at later stages vs.\ the case they are not (say, due to some privacy policy). In the repeated game setting equilibrium play is different if actions are monitored by players or not. The roots of this approach and the implications of monitoring can be traced back to the monumental work of \cite{aumann1995repeated}, who study an abstract zero-sum setting. In recent years this has been studied in more specific economic models such as monopolist pricing \citep{taylor2004consumer}, sequential contracting \citep{calzolari2006optimality}, repeated signaling games \citep{chen2014privacy} and more (we refer the interested reader to a survey of \cite{mailath2006repeated}). In this line of work, perceptions play no direct role in determining a player's utility, but rather an indirect role through the stream of future payoffs.
The way perceptions play a role in our model can be interpreted as a reduced-form model of a repeated interaction where the future evolution of a game is ambiguous, where ambiguity may be with respect to the nature of the opponents, the possible set of actions, the timing or the payoff function.

One paper in which players also have reduced-form preferences for privacy is that of \cite{cummings2016empirical}, who study the testable implications of choice data when consumers
have such preferences. While related to this paper, especially to Section~\ref{sec:revealed-prefs}, the work of \citeauthor{cummings2016empirical} is very different from ours. Most
significantly, in their model a consumer's choices reveal his preferences, and, unlike in our paper, there is no equilibrium involving the observer's inferences and optimal choices given those inferences.

\paragraph{Organization} The rest of the paper is organized as follows. Section~\ref{sec:model} introduces perception games and
states an equilibrium existence result. Section~\ref{sec:privacy} contains our analysis of the effect of privacy concerns on decision-making: Section~\ref{sec:pooling}
on the existence of a pooling equilibrium, Section~\ref{sec:revealed-prefs} on whether pooling is necessarily implied by privacy concerns, and Section~\ref{sec:general-model}
on the effect of privacy concerns on social efficiency. Section~\ref{sec:conclusion} concludes.

\section{Perception Games}\label{sec:model}

A decision maker (DM) must choose an action from some finite set of possibilities, $A$. The utility of the DM is derived from his chosen action, his type, and the beliefs others have of him. 
The DM's chosen action is observed by a second, inactive player, referred to as Big Brother (BB), who forms a belief over the type set. 

Formally, a {\it perception game} is a tuple $(T,A,\beta,u)$, where
\begin{itemize}
\item
$T$ is a finite type space for the active player (the DM);
\item
$A$ is a finite action space (for the DM);
\item
$\beta \in \Delta(T)$
is the prior belief of BB over the set of types;
and
\item
$u:T \times A \times \Delta(T) \to \R$ is the utility function of the DM.
\end{itemize}

The distinguishing feature of this model is the introduction of elements in $\Delta(T)$ as arguments of the utility function. An element in $\Delta(T)$, a distribution over the type set $T$, is the DM's perception: the belief he ascribes to BB about his type. 
See Section~\ref{sec:modeling} for comments on this modeling choice.

A strategy of the active player is a vector $\sigma= \{\sigma_t\}_{t\in T}$, where $\sigma_t \in \Delta(A)$
is the mixed strategy of type $t$.
A \textit{perception} for type $t$ of the active player is a function $\tau_t: A \to \Delta(T)$. That is, for any type $t\in T$ and action $a \in A$, let $\tau_t(a)$ denote the posterior distribution over $T$ following the action $a$. The DM's perception is a vector of perceptions, $\tau = \{\tau_t\}_{t\in T}$, one for each type. The expected payoff of the active player at type $t$, given the perception $\tau_t$ and the strategy $\sigma_t\in\Delta(A)$, is
$$
U(t,\sigma_t,\tau_t)  =
\sum_{a\in A}\sigma_t(a)\cdot u(t,a,\tau_t(a)).
$$

\subsection{A perception equilibrium}

Ignoring perception, the DM's choice of action is determined by maximizing his expected utility, for any given type. In particular, the optimal action for one type would be independent of that of another type. However, when accounting for perception, this notion of maximization is misleading. To see this, note that the utility of some action taken by the DM of one type depends on the belief of BB about the type of DM, updated after seeing the latter's action. However, this belief is derived from the possibility that the same action could have been taken by a different type of the DM, and so there is a clear externality among types. To capture this we couple optimization and beliefs into the notion of a {\em perception equilibrium}.\footnote{The externality across types and our notion of perception equilibrium is analogous to Perfect Bayesian Equilibria in signaling games---see Section~\ref{sec:signaling games analogy}
for an explicit comparison and discussion.}

We precede the definition by some notation.
Let $P(t,a | \sigma) =
\beta(t) \cdot \sigma_t(a)$
be the probability that BB assigns to type $t$ and action $a$, conditional on her playing the strategy profile $\sigma$.
Then $P(a|\sigma) = \sum_{t\in T}P(t,a|\sigma)$ is the probability she assigns to action $a$.
Let $A_\sigma= \{\cup_{\tilde t \in \supp(\beta)}\supp(\sigma(\tilde t))\}$, where $\supp(\sigma(\tilde t))$ is
the set of actions taken by type $\tilde t$ under strategy $\sigma$. The set $A_\sigma$ is then the set of all actions that can possibly be
taken by some possible type under $\sigma$. By definition, $P(a|\sigma)>0$ for all actions in $A_\sigma$, so
$P(t| a,\sigma)= \frac{P(t,a|\sigma)}{P(a|\sigma)}$, the conditional probability that BB assigns to the DM being of type $t$ upon seeing action $a$, is well defined for all $a \in A_\sigma$.

\begin{definition}
A strategy--perception pair $(\sigma,\tau)$ is {\em consistent} if $\tau_t(a)\left(\bar t\right) = P\left(\bar t|a,\sigma\right)$ for all $t, \bar t \in T$  and
$a \in A_\sigma$.
\end{definition}

In words, a player's strategy--perception pair is consistent if the perception is derived by Bayes' rule from the prior and strategy whenever possible.

\bdefi
A {\em perception equilibrium} is a consistent strategy--perception pair $(\sigma,\tau)$ such that
$\sigma$ is a best-reply profile: for every $t\in T$, it holds that $U(t,\sigma_t,\tau_t) \ge
U(t,\hat\sigma,\tau_t)$ for all $\hat\sigma \in \Delta(A)$.
\edefi

Our modeling choice allows for different types to hold different perceptions. This flexibility is economically interesting, 
as different types of DM may have a different BB in mind. In Example~\ref{example:Coventry}, for instance,
it may be the case that a left-leaning Alice has in mind a right-leaning boss whereas the opposite is true for a right-leaning Alice.
Note, however, that despite this flexibility, in equilibrium perceptions must be consistent with Bayes' rule, which implies that perceptions are equal following those actions that are sent with positive probability. However, off the equilibrium path---following actions which no type takes with positive probability---perceptions may differ.%
\footnote{This gives the perception equilibrium some flavor of a subjective or self-confirming equilibrium \citep[see][]{kalai1993subjective, fudenberg1993self}.}

In the Appendix we extend our model beyond the single-player setting. In the extended version of perception games each player serves as some Big Brother for other players and, as such, can be one of many types. We consider a model where each type's beliefs over others' types is subjective and we similarly depart from the common prior assumption. In contrast with the single-player setting, in the extended setting the Bayesian consistency assumption will not reduce ex-post perceptions to be same for all types, even following positive probability events.


As previously discussed, perception games are a variant of signaling games. However, although the DM (``sender" in the jargon of signaling games) has finite type and action spaces,  BB (``receiver") has an infinite action space, as her belief over DM could be any probability distribution. Thus, it is not clear whether equilibrium existence results in the literature on signaling games ensure equilibrium existence here. In fact, to ensure existence we must add the following technical assumption on the utility function:%
\footnote{A further discussion of related equilibrium existence results in signaling games is provided in Appendix \ref{sec:signaling games analogy}.}
\begin{assumption}\label{assumption:continuity}
$u:T  \times A \times \Delta(T) \to \R$ is continuous in the last argument.%
\footnote{One must be careful with this seemingly innocuous assumption. Note that if we view the perception game as
a reduced form of some repeated interaction where BB is expected to take some future action out of a
finite set of actions, then this action will change discretely
at certain beliefs, and therefore the DM''s utility will be discontinuous as a
function of BB''s beliefs.}
\end{assumption}

\begin{proposition}\label{prop_existence}
A perception equilibrium exists in any perception game that satisfies Assumption \ref{assumption:continuity}.
\end{proposition}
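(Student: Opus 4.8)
The plan is to adapt the standard fixed-point argument used for signaling-game equilibria, where the one genuine complication is that the belief map given by Bayes' rule, and hence the consistency requirement, is discontinuous precisely where an action goes unused. I would circumvent this with a trembling-hand perturbation followed by a limiting argument. Fix a small $\eps>0$ (with $\eps<1/|A|$) and restrict each type to the truncated simplex $\Delta^\eps(A)=\{p\in\Delta(A):p(a)\ge\eps\text{ for all }a\in A\}$, and set $\Sigma^\eps=\prod_{t\in T}\Delta^\eps(A)$, which is nonempty, compact, and convex. On $\Sigma^\eps$ every profile is completely mixed, so $A_\sigma=A$ and the denominator $P(a|\sigma)=\sum_{t}\beta(t)\sigma_t(a)\ge\eps$ is bounded away from zero; consequently the consistent perception $\tau(\sigma)$, defined by $\tau_t(\sigma)(\bar t)=P(\bar t\,|\,a,\sigma)$ for every $a$, is uniquely determined and is a continuous (rational) function of $\sigma$ on $\Sigma^\eps$, identical across types.

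Next I would set up a Kakutani fixed point. Define the correspondence $\Phi^\eps:\Sigma^\eps\rightrightarrows\Sigma^\eps$ by
\[
\Phi^\eps(\sigma)=\prod_{t\in T}\;\argmax_{p\in\Delta^\eps(A)}\;\sum_{a\in A}p(a)\,u\bigl(t,a,\tau_t(\sigma)(a)\bigr).
\]
For each $t$ the objective is linear in $p$ and, combining Assumption~\ref{assumption:continuity} with the continuity of $\tau(\cdot)$ just established, jointly continuous in $(\sigma,p)$. Berge's maximum theorem then makes each factor nonempty-, convex-, and compact-valued and upper hemicontinuous, so $\Phi^\eps$ has closed graph and convex values. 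Kakutani's theorem yields a fixed point $\sigma^\eps\in\Phi^\eps(\sigma^\eps)$, which by construction has each type best-responding to the consistent perception $\tau(\sigma^\eps)$ that it itself induces; it is thus an equilibrium of the perturbed problem.

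Finally I would let $\eps\to0$. By compactness of $\Sigma=\prod_t\Delta(A)$ and of $\Delta(T)^A$, pass to a subsequence along which $\sigma^\eps\to\sigma^*$ and the full perception vector $\tau(\sigma^\eps)\to\tau^*$, converging on every action (both on- and off-path). For consistency, take $a\in A_{\sigma^*}$: then $P(a|\sigma^*)>0$, so Bayes' rule is continuous at $\sigma^*$ and $\tau^*_t(a)=P(\cdot\,|\,a,\sigma^*)$ as required, while off-path beliefs are left unconstrained, exactly as the definition permits. For optimality, observe that a perturbed maximizer places mass strictly above the floor $\eps$ only on payoff-maximizing actions; hence whenever $\sigma^*_t(a)>0$ we have $\sigma^\eps_t(a)>\eps$ for all small $\eps$, so $a$ maximizes $u(t,\cdot,\tau_t(\sigma^\eps)(\cdot))$. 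Passing to the limit and invoking the continuity of $u$ together with $\tau(\sigma^\eps)\to\tau^*$ gives $u(t,a,\tau^*_t(a))\ge u(t,a',\tau^*_t(a'))$ for every $a'\in A$. Thus $\sigma^*_t$ is supported on actions optimal against $\tau^*_t$, so $\sigma^*$ is a best reply and $(\sigma^*,\tau^*)$ is a perception equilibrium.

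The hard part is not the fixed point itself but the two boundary issues around it: establishing that the belief map is continuous after perturbation (so Berge applies), and controlling the limit as $\eps\to0$ so that on-path consistency survives while off-path beliefs are free and the best-reply property is preserved. The perturbation is what tames the discontinuity of conditioning, and the ``mass above the floor'' observation is what lets me transfer perturbed optimality to the limiting equilibrium.
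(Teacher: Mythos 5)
Your proof is correct, but it takes a genuinely different route from the paper's. You tame the discontinuity of Bayes' rule by a trembling-hand perturbation: restrict to the truncated simplex $\Delta^\eps(A)$ where every action is on-path, so the consistent belief map is a continuous function of $\sigma$, apply Berge and Kakutani there, and then extract a convergent subsequence as $\eps\to 0$, with off-path beliefs in the limit arising endogenously as limits of conditional probabilities and the ``mass above the floor'' observation transferring optimality to the limit. The paper instead works directly on the unperturbed strategy space: for each profile $\sigma$ it \emph{selects} a particular consistent perception $\tau^\sigma$ under which every action that is weakly optimal for \emph{some} consistent perception yields the same utility (this uses the freedom of off-path beliefs plus the fact that the achievable utility sets are intervals in $\R$ with pairwise nonempty intersections, hence a common point), and then shows that the best-response correspondence relative to this selection is upper-hemicontinuous via a Bolzano--Weierstrass argument on the perceptions, before invoking Kakutani once. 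Your approach is more modular and standard, and as a bonus delivers an equilibrium that is a limit of completely mixed profiles, so its off-path beliefs satisfy a consistency refinement the paper does not ask for; the paper's approach avoids the subsequence extraction, exhibits the supporting off-path beliefs explicitly, and is packaged so that exactly the same machinery proves the two-player extension (Proposition~\ref{prop_existence_of_equilibrium}) in the appendix, of which Proposition~\ref{prop_existence} is then a corollary. Both arguments rely only on Assumption~\ref{assumption:continuity} and finiteness of $T$ and $A$, and I see no gap in yours.
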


 The proof follows from standard arguments and is thus relegated to Appendix~\ref{sec:proofs}.

\begin{remk}
Assumption~\ref{assumption:continuity} is necessary for the equilibrium existence result.\footnote{Furthermore, in Appendix~\ref{sec:continuity-necessary} we show that
upper or lower semicontinuity alone are not sufficient.} We demonstrate this with an example in Appendix~\ref{sec:continuity-necessary}.
\end{remk}


\section{Perception and Privacy}\label{sec:privacy}

Recall that  perception games were introduced as a tool to study the implications of privacy concerns. In the context of privacy one should think of the set of types as the traits individuals would like to keep private. As an example, an element $t \in T$ can represent the DM's political views, his attitude towards risk,
his location, his willingness-to-pay for some good, etc.

In order to capture the possible predilection for privacy we proceed by introducing some structure on the utility function $u$.
%
%
%
The study of perceptions in the literature, either those of society or self-image, typically assume that types have some natural order and all agents would like to be perceived as (close as possible to) the ideal type (e.g., benevolent, heroic, generous, romantic, and so on). In contrast, in the context of privacy an alternative concern about perceptions comes to mind. Privacy may be thought of as the \emph{dis}interest of an individual to have others learn about him. Of course, as already discussed in the introduction, what exactly is meant by `privacy concerns' is quite elusive. Hence, we introduce two alternative structures to capture a predilection for privacy: the DM could be anonymity-seeking or identification-averse.

The first conception of privacy concerns is that agents optimally do not wish anything to be disclosed about them, beyond what is already known.
This is captured by the following definition:
\bdefi
\label{upper_def_privacy}
A DM of type $t\in T$ in a perception game $(T,A,\beta,u)$ is {\em anonymity-seeking}  if for every $a\in A$,
$$\beta \in \arg\max_{\mu\in\Delta(T)} u(t,a,\mu).$$
A DM is {\em anonymity-seeking} if he is anonymity-seeking for every type $t\in T$.
\edefi

In words, anonymity-seeking preferences captures the setting where type $t$ prefers that nothing be disclosed about his type beyond BB's initial belief,
and who thus wishes to be perceived as an ``average" type.%
\footnote{A possible critique of an anonymity-seeking preference is that it exhibits some logical inconsistency in a dynamic setting. Assume a DM faces two consecutive perception games, and assume that there are strategies for both games such that BB does learn about the DM in each, but in a way that cancels out. That is, following both games, BB's perception is back at its starting point. In the grand game (composed of the two stage games), the DM lands the best perception yet in each separate stage he does not. Note
that identification-aversion does not suffer from this inconsistency.}
For example, an employee may wish to keep his political views private from a future employer, and a consumer of online adult entertainment, even when known to be a consumer of such entertainment, would prefer to maintain ambiguity over the preferred specific genre.

A different conception of privacy concerns is that agents do not wish their true type to be revealed. This is captured in the following definition:
\bdefi
\label{lower_def_privacy}
A DM of type $t\in T$ in a perception game $(T,A,\beta,u)$ is {\em identification-averse}  if for every $a\in A$,
$$\chi(t) \in \arg\min_{\mu\in\Delta(T)} u(t,a,\mu),$$
where $\chi(t)$ is the Dirac measure on $t$.
A DM is {\em identification-averse} if he is identification-averse for every type $t\in T$.
\edefi
Unlike an anonymity-seeking DM, an identification-averse one may benefit from the leakage of some (correct or incorrect) information to BB. However, he
does not wish his true type to be disclosed.
For example, an individual worried about stalking wishes to keep his location private, and a consumer contemplating a purchase does not want to disclose
his willingness-to-pay.

Note that for both variants of privacy concern the requirement is that the concern holds for all types. This is necessary for Theorems \ref{thm:pooling-upper} and \ref{thm:pooling-lower}, which do not hold if we require privacy concerns for only one type.

We now turn to study the implications of privacy concerns. How do privacy concerns factor into the choice of action? Does the DM pool actions across types in order to masquerade his type?  Can privacy be detected from revealed preferences? Is the outcome necessarily inferior compared to the setting where the actions of the DM are not observed?

\subsection{Privacy and pooling}
\label{sec:pooling}

We begin the study of privacy and pooling by reformulating Example \ref{example:Coventry} more rigorously:

\begin{example}\label{Example1revisited}
Alice is one of two types $\{\ell,r\}$, with a prior $\beta(\ell)=\beta(r) = 0.5$, and she must choose between one of two actions $\{L,R\}$. The utility of Alice is given by
$u(a,t,p)= \mathbf{1}_{\{t=a\}}- 2|p-0.5|$, where $\mathbf{1}_{\{t=a\}}$ is the indicator function for the event $\{t=a\}$ and $p$ is the probability assigned to Alice being of type $\ell$. One can easily verify that Alice is both anonymity-seeking and identification-averse.

Consider the pooling decision where Alice reads blog $R$ no matter what her type is, but believes that if blog $L$ would have been read then $BB$ would think Alice is of type $\ell$. Note that if both types follow this strategy then neither incurs any disutility due to perceptions. In addition, type $r$ gets a `material' utility of 1, whereas type $\ell$ gets a `material' utility of $0$. More importantly, neither type has an incentive to deviate, so this defines a perception equilibrium. Symmetrically, there exists another perception equilibrium, where actions $R$ and $L$ are reversed.
A third perception equilibrium is one where both types choose their favorite blog. In this equilibrium both types obtain a utility of 0.\footnote{In this example there
is no equilibrium in which the DM uses a type-dependent lottery (in any such strategy profile type $\ell$ could profitably deviate to always
play $L$
and type $r$ to always play $R$). However, this need not always be the case, unlike in standard signaling games.}


\end{example}

In fact, the pooling behavior in this concrete example is an instance of a broader phenomenon. In the following two theorems we describe necessary and sufficient
conditions for {\em full pooling}, in which all types take the same action. The first theorem applies to an anonymity-seeking DM, and the second
to an identification-averse one.

We begin with a definition:
\begin{definition}\label{def:potentially-optimal}
An action $a\in A$ is {\em potentially optimal} for type $t\in T$ if for any action $a'$ there exists a pair of beliefs $\mu, \mu'$ such that $u(t,a,\mu) \geq u(t,a',\mu')$.
Let $L(t)$ be the set of actions that are potentially optimal.
\end{definition}
That is, an action $a$ is potentially optimal if there is no other action that is strictly better for all perceptions. The following theorem provides necessary and sufficient conditions
for full pooling, for an anonymity-seeking DM:\footnote{In fact, under a slightly stronger definition of anonymity-seeking---that $\beta$ be the unique element of the $\arg\max$
in Definition~\ref{upper_def_privacy}---the pooling equilibria, when they exist, may have the additional property that they are not dominated by any other 
consistent strategy-perception pair. Such a pooling equilibrium can be constructed by choosing  $a\in \L$ such that for some type $t\in T$ it holds that 
$a\in \arg\max_{a'\in A} u(t,a',\beta)$. Such an $a$ always exists, and in this equilibrium type $t$ obtains his strictly highest possible payoff.}

\begin{theorem}\label{thm:pooling-upper}
For an anonymity-seeking DM, full pooling is an equilibrium if and only if $\bigcap_t L(t) \neq \emptyset$.
\end{theorem}

The intuition underlying Theorem~\ref{thm:pooling-upper} is quite simple: If the condition is not satisfied then each action is dominated for at least one of the types. 
A dominated action is never played, and so full pooling cannot occur on any action. Sufficiency of the condition hinges on the flexibility the model regarding off-equilibrium perceptions.
 The formal proof follows: 


\begin{proof}
Let $\L=\bigcap_t L(t) \neq \emptyset$, and suppose first that $\L\neq \emptyset$.
Fix an action $a\in \L$ 
, and for each $t\in T$ and $a'\in A\setminus\{a\}$ let $\tau_t^{a'}\in\Delta(T)$ be a distribution for which $u(t,a,\tau_t^{a})\geq u(t,a',\tau_t^{a'})$
for some $\tau_t^{a}$.
Such a $\tau_t^{a'}$ exists since $a$ is potentially optimal for $t$.
Since the DM is anonymity-seeking, it holds that $u(t,a,\beta)\geq u(t,a,\tau_t^{a})\geq u(t,a',\tau_t^{a'})$.

Let us now construct the perception $\tau_t$ that supports action $a$ in a fully pooling equilibrium: $\tau_t(a)=\beta$ and $\tau_t(a')=\tau_t^{a'}$ for all $a'\in A\setminus\{a\}$. Denoting by  $\sigma^a$  the pure strategy $a$ the above yields
$U(t,\sigma^a,\tau_t)\geq U(t,\bar\sigma,\tau_t)$ for any alternative strategy $\bar\sigma$.
Letting $\sigma$ be the pure strategy
for which $\sigma(t)=\sigma^a$ for all $t\in T$ and $\tau=\{\tau_t\}_{t\in T}$, we obtain $(\sigma,\tau)$ as the desired fully-pooling equilibrium.

Next, assume $\L=\emptyset$. Suppose towards a contradiction that there does exist a fully-pooling equilibrium $(\sigma,\tau)$: for some $a\in A$, the strategy profile $\sigma$
satisfies
 $\sigma(t)=\sigma^a$ for all $t\in T$.
Since $\L=\emptyset$, there exists a type $t\in T$ for whom $a$ is not potentially optimal. That is, there exists some action $a'\neq a$ such that for all
pairs of beliefs $\mu, \mu'$ it holds that $u(t,a,\mu) < u(t,a',\mu')$. In particular, this means that  $u(t,a,\beta) < u(t,a',\tau_t(a'))$ (where $\tau_t(a')$ is the perception following action $a'$ in the alleged equilibrium). Thus, a deviation to action $a'$ is
profitable for type $t$, contradicting the assumption that $(\sigma,\tau)$ is an equilibrium.
\end{proof}

Next, we state a similar theorem for an identification-averse DM. Again, we first need a definition, a variant of Definition~\ref{def:potentially-optimal}:
\begin{definition}
An action $a\in A$ is {\em potentially optimal at the extremes} for type $t\in T$ if for any action $a'$ it holds that $u(t,a,\beta) \geq u(t,a',\chi(t))$.
Let $M(t)$ be the set of actions that are potentially optimal at the extremes.
\end{definition}

Given this definition, the following theorem provides necessary and sufficient conditions for full pooling, for an identification-averse DM:

\begin{theorem}\label{thm:pooling-lower}
For an identification-averse DM, full pooling is an equilibrium if and only if $\bigcap_t M(t) \neq \emptyset$.
\end{theorem}

The intuition underlying  Theorem~\ref{thm:pooling-lower} is similar to that of Theorem~\ref{thm:pooling-upper}.
\begin{proof}
Let $\M=\bigcap_t M(t) \neq \emptyset$, and suppose first that $\M\neq \emptyset$.
Fix any action $a\in \M$, and for each $t\in T$ and $a'\in A\setminus\{a\}$ let $\tau_t(a')=\chi(t)$, the Dirac measure on $t$.
Since the DM is identification-averse, it holds that $u(t,a,\beta)\geq u(t,a',\chi(t)) = u(t,a',\tau_t(a'))$. This implies that
$U(t,\sigma^a,\tau_t)\geq U(t,\sigma^{a'},\tau_t)$,
where $\tau_t(a)=\beta$. Letting $\sigma$ be the pure strategy
for which $\sigma(t)=\sigma^a$ for all $t\in T$ and $\tau=\{\tau_t\}_{t\in T}$ we obtain $(\sigma,\tau)$ as the desired fully-pooling equilibrium.

Next, assume $\M=\emptyset$. Suppose towards a contradiction that there does exist a fully-pooling equilibrium $(\sigma,\tau)$: for some $a\in A$, the strategy
 profile $\sigma$ satisfies $\sigma(t)=a$ for all $t\in T$.
Since $\M=\emptyset$, there exists a type $t\in T$ for whom $a$ is not potentially optimal at the extremes. That is, there exists some action $a'\neq a$ such that
$u(t,a,\beta) < u(t,a',\chi(t))$. Furthermore, since the DM is identification-averse, $u(t,a',\chi(t))\leq u(t,a',\mu)$ for all $\mu\in\Delta(T)$.
In particular, this means that  $u(t,a',\chi(t))\leq u(t,a',\tau_t(a'))$, and so $u(t,a',\beta)\leq u(t,a',\tau_t(a'))$. Thus, a deviation to action $a'$ is
profitable for type $t$, contradicting the assumption that $(\sigma,\tau)$ is an equilibrium.
\end{proof}

The pooling phenomena of Theorems \ref{thm:pooling-upper} and \ref{thm:pooling-lower}, which are possibly inconsistent with material considerations, are reminiscent of pooling phenomena in the literature on self-image. However, the underlying forces that induce pooling are slightly different. In the literature on self-image players pool because they all want to be perceived as some ideal type. In our context an ideal type need not exist, and pooling is a result of players' desire not to disclose information about their type, even if different players agree on which perception is best.

Recall that in Example \ref{Example1revisited}, beyond the pooling equilibria, there is a non-pooling equilibrium where both types read their optimal blog. This demonstrates that privacy concerns need not imply pooling, nor do they imply that the DM will forego an optimal action. This raises the following natural question: Can we determine whether or not individuals have privacy concerns from observing their actions?
This question is explored further in Section~\ref{sec:revealed-prefs}.

\subsection{Privacy and revealed preferences}\label{sec:revealed-prefs}
The increasing prevalence of social and economic endeavors online has prompted some to declare that individuals no longer care about privacy.
For example, Facebook founder Mark Zuckerberg has defended a change in Facebook's privacy policy---switching the default option for the visibility of posts from
private to public---by stating that
``People have really gotten comfortable not only sharing more information and different kinds, but more openly and with more people. That social norm is just something that has evolved over time'' \citep[see, e.g.,][]{kirkpatrick2010facebooks}.
In other words, the observed activity in social networks reflects, according to Zuckerberg, a change in social norms to a reduction in concerns for privacy. Can this deduction be sustained by our theory?

For the rest of this section we will consider a simplified version of our model, in which utilities are additive: Formally, for each type $t$,
action $a$, and perception
$\mu$ let
$$u(t, a, \mu) = v(t,a)-w(t,\mu),$$
where the first summand is type $t$'s utility from action $a$ and  the second summand is the privacy loss of that type associated with a perception $\mu$.
Furthermore, we will assume that privacy concerns are separable:
suppose each type consists of two elements, an outcome-type and a privacy-type, namely $T=T_o\times T_p$.
For each type $t=(t_o,t_p)$, we will make two simplifying assumptions:
\begin{enumerate}
\item The utility from actions depends only on the outcome-type: $v(t,a)=v(t',a)$ whenever $t_o=t_o'$;
\item The disutility from perceptions is a function of perceptions over the outcome-type only:
$w(t,\mu)=w(t,\mu')$ whenever $\mu|_{T_o}=\mu'|_{T_o}$, where $\mu|_{T_o}$ is the marginal distribution of $\mu$ over $T_o$.
\end{enumerate}
Finally, for each $t\in T$ let $a_t = \arg\max_a v(t,a)$, and suppose that $a_t\neq a_{t'}$ whenever $t_o\neq t'_o$.

In the context of social networks, these assumptions seem reasonable. As an example, suppose the outcome-type corresponds to the player's location, the action corresponds
to sharing a location online, and $v(t,a)$ corresponds to the player's
utility from posting a location and sharing it with his friends online (what is known as a `check-in' on Facebook). Privacy-types correspond to the player's level of concern about possible leakage of his location to other parties
and misuse of this information. Note that players derive utility from posting their location, but not from posting their level of privacy concern, and so $v$ depends only on $t_o$ and
$a$. Additionally, privacy concerns are with respect to a third party's perception of the player's location---that is, his outcome-type---and not their perception of
his level of privacy concern.

Now, in this simplified setting, Theorems~\ref{thm:pooling-upper} and~\ref{thm:pooling-lower} provide necessary and sufficient conditions for full pooling. But is {\em full separation},
in which each outcome-type
posts a unique location, possible? Obviously, if $w\equiv 0$ then the answer is positive, since then each type $t$ will post location $a_t$. But what if players do
have privacy concerns?

The following assumption places a lower bound on utilities from optimal actions. In the social network example, it states that the benefit of posting a true location online
over posting a false one
is bounded below by the difference in privacy loss associated with the true location revealed versus a false location revealed.
\begin{assumption}\label{assumption:opt-actions}
For each pair of types $t=(t_o,t_p)$ and $t'=(t'_o,t'_p)$ with $t_o\neq t'_o$ it holds that $v(t,a_t)-v(t,a_{t'}) > w(t, \chi(t)) - w(t, \chi(t'))$, where $\chi(t)$ is the Dirac measure on $t$.
\end{assumption}

Note that Assumption~\ref{assumption:opt-actions} is compatible with a DM being both anonymity-seeking and identification-averse.


A first observation is that under Assumption~\ref{assumption:opt-actions}, there exists a separating equilibrium, in which each type $t$ takes action $a_t$. This is related to
the similar observation made in Example \ref{Example1revisited}.
\begin{claim}
In any perception game satisfying Assumption~\ref{assumption:opt-actions} there exists a perception $\tau$ such that $(\sigma,\tau)$ is an equilibrium, and where $\sigma(t)=\sigma^{a_t}$ for
all $t\in T$.
\end{claim}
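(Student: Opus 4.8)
The plan is to exhibit the separating strategy explicitly, attach to it a perception $\tau$ that is Bayesian-consistent on the equilibrium path and pessimistic off it, and then check that no type wishes to deviate. Take $\sigma(t)=\sigma^{a_t}$ for every $t$, so each outcome-type $t_o$ plays its material-optimal action $a_{t_o}$. Because $a_t\neq a_{t'}$ whenever $t_o\neq t'_o$, the on-path actions are pairwise distinct and $A_\sigma=\{a_{t_o}:t_o\in T_o\}$ has exactly one element per outcome-type. The first substantive step is to read off the forced on-path perception: for $a_{t_o}\in A_\sigma$ consistency requires $\tau_t(a_{t_o})=P(\cdot\mid a_{t_o},\sigma)$, which (since the $a_{s_o}$ are distinct) is supported only on types sharing outcome-type $t_o$, hence has $T_o$-marginal equal to the Dirac measure at $t_o$. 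This coincides with the $T_o$-marginal of $\chi(t)$ for any $t=(t_o,t_p)$, so the separability assumption that $w(t,\mu)$ depends on $\mu$ only through $\mu|_{T_o}$ gives the key simplification $w(t,\tau_t(a_{t'_o}))=w(t,\chi(t'))$ for any representative $t'=(t'_o,\cdot)$, independently of the privacy coordinate. In particular the equilibrium payoff of type $t$ is $v(t,a_t)-w(t,\chi(t))$.

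Next I would fix the off-path perception: for every $a'\notin A_\sigma$ let $\tau_t(a')$ be a maximizer of $w(t,\cdot)$ over $\Delta(T)$, which exists because $\Delta(T)$ is compact and $w(t,\cdot)=v(t,a')-u(t,a',\cdot)$ is continuous by Assumption~\ref{assumption:continuity}. Deviations then split into two cases. A deviation by type $t$ to another on-path action $a_{t'_o}$ with $t'_o\neq t_o$ yields payoff $v(t,a_{t'})-w(t,\chi(t'))$, so non-profitability amounts to $v(t,a_t)-v(t,a_{t'})\geq w(t,\chi(t))-w(t,\chi(t'))$, which is exactly the content of Assumption~\ref{assumption:opt-actions} (indeed it holds strictly). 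A deviation to an off-path $a'$ yields $v(t,a')-\max_\mu w(t,\mu)\leq v(t,a')-w(t,\chi(t))\leq v(t,a_t)-w(t,\chi(t))$, where the first inequality holds since the maximum dominates the particular value $w(t,\chi(t))$ and the second since $a_t$ maximizes $v(t,\cdot)$. Both cases leave the deviation weakly worse than the equilibrium payoff, so $\sigma$ is a best reply to $\tau$ and $(\sigma,\tau)$ is the desired separating equilibrium.

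I expect the only genuinely delicate point to be the on-path case, and it is precisely there that Assumption~\ref{assumption:opt-actions} is indispensable. Its role becomes visible only after the observation that Bayesian consistency collapses the post-deviation perception of a correctly-inferred outcome-type down to the same privacy loss as full revelation, via the $T_o$-marginal argument; the assumption then says that the material gain from playing one's own optimal action outweighs exactly this difference in privacy losses. The off-path case, by contrast, is routine once worst-case off-path beliefs are combined with the material optimality of $a_t$, and the choice of those worst-case beliefs is where the freedom of off-equilibrium perceptions (noted after the consistency definition) is used.
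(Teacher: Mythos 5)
Your proof is correct and follows essentially the same route as the paper's: the same separating strategy, the same observation that Bayesian consistency forces the on-path perception's $T_o$-marginal to be Dirac on the outcome-type so that the separability of $w$ reduces the privacy loss to $w(t,\chi(t'))$, Assumption~\ref{assumption:opt-actions} for on-path deviations, and material optimality of $a_t$ for off-path ones. The only (immaterial) difference is your choice of off-path belief as a maximizer of $w(t,\cdot)$ rather than the paper's simpler $\chi(t)$, which spares you the appeal to compactness and continuity.
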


\begin{proof}
For each $t\in T$ and $a\in A_{\max} = \{a: \exists t\in T \mbox{ s.t.\ } a=a_t\}$, let $\tau_t(a)=\beta|_{\{t:a_t=a\}}$, where $\beta|_{\{t:a_t=a\}}$ is the distribution $\beta$ conditional
on the type belonging to the set
$\{t:a_t=a\}$.  For each $t\in T$ and $a\not\in A_{\max}$, let $\tau_t(a)=\chi(t)$.
It is straightforward to verify that $(\sigma,\tau)$ is consistent.

To see that $\sigma$ is a best-reply profile, fix some type $t$ and observe that the utility of type $t$ under $(\sigma,\tau)$ is $v(t,a_t)-w(t,\beta|_{\{t':a_{t'}=a_t\}})$.
Note also that $w(t,\beta|_{\{t':a_{t'}=a_t\}}) = w(t,\chi(t))$, since under $\sigma$ all types taking action $a_t$ have the same outcome type $t_o$, and the
disutility from perceptions is a function of perceptions over the outcome-type only.

Now consider a possible deviation of type $t$ from $\sigma^{a_t}$ to $\sigma^{a}$, where $a\neq a_t$. If $a\in A_{\max}$ then $\tau_t(a)=\beta|_{\{t':a_{t'}=a\}}$.
Note that, as above, the disutility $w(t,\beta|_{\{t':a_{t'}=a\}}) = w(t,\chi(t''))$, where $t''$ is some type for whom $a_{t''}=a$. Thus, under this deviation,
$t$'s utility is $v(t, a_{t''})-w(t,\chi(t''))$. Assumption~\ref{assumption:opt-actions} implies that this deviation is not profitable.
On the other hand, if the deviation $\sigma^a$ is such that  $a\not\in A_{\max}$ then $t$'s utility under the deviation is $v(t,a)-w(t,\chi(t))$. However, $v(t,a_t)\geq v(t,a)$ and so
$v(t,a_t)-w(t,\chi(t))\geq v(t,a)-w(t,\chi(t))$, implying that again the
deviation is not profitable. Thus, there is no profitable deviation, and so $\sigma$ is a best-reply profile.
\end{proof}

So under Assumption~\ref{assumption:opt-actions} there always exists a separating equilibrium, even if the DM has privacy concerns (he is anonymity-seeking or identification-averse).
What makes Assumption~\ref{assumption:opt-actions} interesting, however, is that it may entail separation as the {\em unique} equilibrium.
Suppose $\abs{A}=\abs{T_o}$, and so every action is optimal for some outcome-type.
Let $\overline{t}\in T_p$ be a privacy-type who does not care about perception:
formally, $w(t,\cdot)\equiv 0$ whenever $t_p = \overline{t}$. Other privacy-types may have {\em arbitrary} privacy concerns. Let $\alpha$ be the fraction of types $t\in T$ (under the prior $\beta$)
with $t_p=\overline{t}$. Then:
\begin{theorem}\label{Claim:robustness}
For any perception game satisfying Assumption~\ref{assumption:opt-actions} there exists an $\alpha_0<1$ such that if $\alpha\geq \alpha_0$ then full separation is the unique equilibrium.
\end{theorem}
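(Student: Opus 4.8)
The plan is to exploit a ``tyranny of the majority'': the privacy-indifferent types, who make up a fraction $\alpha$ of the population, are forced to separate, and when $\alpha$ is large they pin down the perception after every action so sharply that even the privacy-concerned minority is compelled to separate. First I would record two structural facts. Since $|A|=|T_o|$ and $a_t=a_{t'}$ exactly when $t_o=t_o'$, the assignment $t_o\mapsto a_{t_o}$ is a bijection of $T_o$ onto $A$; write $o(a)$ for the unique outcome-type with $a_{o(a)}=a$. Next, a type $t=(t_o,\overline t)$ has $w(t,\cdot)\equiv 0$, so Assumption~\ref{assumption:opt-actions} applied to this $t$ and any $t'=(s,\cdot)$ with $s\neq t_o$ reduces to $v(t_o,a_{t_o})-v(t_o,a_s)>0$. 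As every action is some $a_s$, this shows $a_{t_o}$ strictly maximizes $v(t_o,\cdot)$, so in \emph{every} equilibrium each indifferent type $(t_o,\overline t)$ plays $\sigma^{a_{t_o}}$ as a strictly dominant action. I will assume throughout that $\beta$ assigns positive probability to each $(s,\overline t)$ (a mild full-support condition); this makes every action lie in $A_\sigma$, so no off-path perceptions ever arise.

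The second step is a uniform posterior-concentration bound. Fix any equilibrium $(\sigma,\tau)$ and any action $a$. Among the indifferent types, only $(o(a),\overline t)$ plays $a$, contributing mass exactly $\beta(o(a),\overline t)$ to $P(a\mid\sigma)$; all remaining weight on $a$ comes from privacy-concerned types, whose total mass is at most $1-\alpha$. Bayesian consistency then yields
\[
\tau_t(a)\big(\{t'':t''_o=o(a)\}\big)\ \geq\ \frac{\beta(o(a),\overline t)}{\beta(o(a),\overline t)+(1-\alpha)}.
\]
Setting $\gamma:=\min_s\beta(s,\overline t)>0$, the $T_o$-marginal of $\tau_t(a)$ therefore lies within total-variation distance $(1-\alpha)/\gamma$ of the Dirac measure $\chi_s$ on the outcome-type $s=o(a)$. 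Crucially, this bound is uniform over all equilibria and all actions, because it uses only the forced behavior of the indifferent majority together with the crude mass bound $1-\alpha$ on the minority---never the minority's actual (and a priori unknown) strategy.

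The third step converts concentration into strict incentives. Because $w(t,\mu)$ depends only on $\mu|_{T_o}$, it induces a well-defined map $\hat w(t,\cdot)$ on $\Delta(T_o)$, continuous by Assumption~\ref{assumption:continuity} (compose with a fixed continuous lift $\Delta(T_o)\to\Delta(T)$), and the payoff of type $t$ from action $a$ is $v(t_o,a)-\hat w(t,\tau_t(a)|_{T_o})$. Rearranging Assumption~\ref{assumption:opt-actions}, for every $t=(t_o,t_p)$ and $s\neq t_o$,
\[
v(t_o,a_{t_o})-\hat w(t,\chi_{t_o})\ >\ v(t_o,a_s)-\hat w(t,\chi_s),
\]
so the minimum $\delta$ of the left-minus-right side over the finitely many pairs is strictly positive. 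Let $\rho$ be a common modulus of continuity for the finitely many maps $\hat w(t,\cdot)$, with $\rho(x)\to 0$ as $x\to 0$. By the concentration bound, playing $a_{t_o}$ earns at least $v(t_o,a_{t_o})-\hat w(t,\chi_{t_o})-\rho((1-\alpha)/\gamma)$, while any deviation to $a_s$ earns at most $v(t_o,a_s)-\hat w(t,\chi_s)+\rho((1-\alpha)/\gamma)$; their difference is at least $\delta-2\rho((1-\alpha)/\gamma)$. Choosing $\alpha_0<1$ so that $2\rho((1-\alpha)/\gamma)<\delta$ for all $\alpha\geq\alpha_0$ makes $a_{t_o}$ the strict best reply for every type. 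Hence the only equilibrium strategy is the separating profile $\sigma(t)=\sigma^{a_t}$, and since such an equilibrium exists by the Claim above, full separation is the unique equilibrium.

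The crux, and the step I would guard most carefully, is making the concentration bound genuinely independent of the minority's endogenous play and then turning it into a single threshold $\alpha_0$: the perceptions $\tau$ are determined in equilibrium, so I must bound them using only the forced separation of the indifferent types and the mass bound $1-\alpha$, and I must control the induced payoff error by a modulus of continuity that is uniform across the finitely many types. A secondary but essential point is the standing assumption that each $(s,\overline t)$ has positive prior weight, which is exactly what guarantees $\gamma>0$, places every action on the path, and so dispenses with any need to reason about off-equilibrium beliefs.
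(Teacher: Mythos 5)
Your proposal is correct and follows essentially the same route as the paper's proof: the privacy-indifferent types separate by strict dominance, the resulting posteriors concentrate on the relevant outcome-type, and continuity plus Assumption~\ref{assumption:opt-actions} then makes $a_{t_o}$ the strict best reply for every type once $\alpha$ is large. The only difference is one of rigor---you replace the paper's limit argument ($\alpha\rightarrow 1$) with an explicit concentration bound that is uniform over all candidate equilibria and a uniform modulus of continuity yielding a concrete threshold $\alpha_0$, and you make explicit the full-support condition on the types $(s,\overline{t})$ that the paper folds into its requirement that each $T_a$ contain a positive mass of indifferent types.
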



\begin{proof}
%
%
%
For each action $a$, let $T_a\subseteq T$ be the set of types for whom $a$ is the optimal action, namely $T_a=\{t\in T: a_t=a\}$.
Let $\alpha$ be large enough so that $P_{t'\sim \beta}(t'\in T_a \cap t'_p=\ot)>0$ for every $a\in A$: in words, for every action $a$ there is a positive
measure of types who do not care about perception and for whom $a$ is optimal.
This will be true, for example, whenever $\alpha > \max_a 1-P(t'\in T_a)$.

Consider some action profile $\sigma$ in which types who do not care about perception play optimally, namely $\sigma(t)=\sigma^{a_t}$ whenever $t_o=\overline{t}$.
For any type $t$, let $\tau_{t}^o(a)$ be the marginal distribution of the perception on action $a$ over $T_o$.
Let $\tau_{t}^o(a)'$ denote the probability that $\tau_{t}^o(a)$ assigns to the event $(t'\in T_a)$. 
Then
$$\tau_{t}^o(a)'%
=\frac{P(t'\in T_a \cap t'_p=\ot) + \sum_{t':t'\in T_a \cap t'_p\neq \ot}P(t')\cdot\sigma(t')(a)}{P(t'\in T_a \cap t'_p=\ot) + \sum_{t':t'_p\neq \ot}P(t')\cdot\sigma(t')(a)},$$
where the probabilities are over the choice of $t'$ according to the prior $\beta$.
Observe that $\sum_{t':t'_p\neq \ot}P(t')=1-\alpha$, and so
 as $\alpha\rightarrow 1$ both $\sum_{t':t'_p\neq \ot}P(t') \rightarrow 0$ and $\sum_{t':t'\in T_a \cap t'_p\neq \ot}P(t')\rightarrow 0$. This
 implies that the second summand of both the numerator and denominator in the calculation of $\tau_{t}^o(a)'$ approach 0, and so  $\tau_{t}^o(a)'\rightarrow 1$.

For any type $t$ with $t_p\neq \ot$, the utility at an action $a$ is $v(t,a)-w(t,\tau_t(a))$. From continuity, the fact that the disutility from perception
depends only on the perception over the outcome type, and the fact that the probability assigned to the outcome type being $t_a$ approaches 1,
it follows that the utility of type $t$ at action $a$ approaches $v(t,a)-w(t,\chi(t_a,t_p))$ as $\alpha\rightarrow 1$ (for any $t_p\in T_p$).
This, together with Assumption~\ref{assumption:opt-actions} that $v(t,a_t)-w(t,\chi(t))> v(t,a')-w(t,\chi(t_{a'},t_p))$, implies that for large enough $\alpha$
the unique best reply of type $t$ is $a_t$.
Thus,  $(\sigma,\tau)$ is the unique equilibrium.
\end{proof}


Theorem~\ref{Claim:robustness} demonstrates the possibility of a tyranny of a majority that has no privacy concerns over a minority that does. This is quite interesting as the right to privacy has always been thought of as an aspect of individual liberty (this goes all the way back to the English Maxim that one's home is one's castle \citep{solove2006brief}). This phenomenon is a consequence of the externalities across types that is intrinsic to the notion of a perception equilibrium.  Thus, one must account for such dangers in the design of institutions for information sharing, even if participation is voluntary and information is non-verifiable.

Theorem~\ref{Claim:robustness} also has interesting implications in the context of a principal-agent setting. Consider a principal who introduces a mechanism that accounts for an agent that has no privacy concerns and contemplates the outcome in its unique separating equilibrium. Can she expect a similar outcome if she is wrong, and agents possibly have privacy concerns? The theorem suggests that if privacy concerns exists with a small (but positive) probability she can expect the same outcome, and hence the mechanism can be considered robust to a specific perturbation in the privacy hypothesis of the principal.

\subsection{Privacy and social welfare}\label{sec:general-model}
One common argument made against pro-privacy legislation is that it is detrimental to social welfare \citep[see][]{posner1981economics}. Two examples \citeauthor{posner1981economics} discusses are that concealing criminal records will lead to an inferior match in the labor market, and that confidentiality of debtors' information will distort  the credit market and lead to inefficient interest rates and risky loans. In these examples, privacy concerns as an intrinsic component of one's utility, and hence as a component of social welfare, have been ignored. Perhaps if these concerns would be accounted for then the conclusion would be that the material welfare decrease due to such privacy legislation is a worthwhile sacrifice for the benefit of privacy.

More precisely, suppose one can split total welfare into welfare from actions and welfare from privacy. Suppose also that the effect of privacy legislation is that actions are not observed, and
so perceptions are fixed at the optimal $\beta$ (that is, optimal when types are anonymity-seeking). The argument made by \cite{posner1981economics} and others is that privacy legislation reduces welfare from actions. However, privacy legislation may increase total welfare if welfare from privacy makes up the lost welfare from actions.

Our model suggests that privacy legislation may also increase total welfare by increasing welfare from actions. Recall that in some of the equilibria in Example \ref{Example1revisited} types pool and so one type takes an inferior action when actions are observed. Consider the same example with privacy regulation ensuring that no one can see the blog that the player reads. Clearly, each type would then read his optimal blog. So here, privacy legislation would lead to higher welfare from actions, and leave welfare from privacy unchanged (since types pool their actions when observed).

In fact, this last point is general: it is quite straightforward to see that for any one-player perception game, privacy legislation would always (weakly) increase total welfare whenever players are anonymity-seeking. It
(weakly) increases welfare from privacy---since under privacy legislation the DM obtains his most-preferred perception---and it also (weakly) increases welfare from actions, since under privacy legislation each type is free to choose his optimal action.\footnote{Clearly, BB's utility is ignored in this line of reasoning, and so it does not apply to \citeauthor{posner1981economics}'s examples. Those examples are, in fact, two-player games, where BB is the second player.}

However, this unequivocal benefit of privacy legislation no longer obtains in settings with more than one player. In the more general setting, it is possible that privacy legislation would decrease welfare from actions, and leave welfare from privacy unchanged.  As the example that follows will demonstrate, the lack of privacy legislation may yield higher welfare even if all members of society have privacy concerns. Thus, the conclusion advocated by \cite{posner1981economics}, that privacy legislation decreases total welfare,
may hold even if privacy confers its own intrinsic benefit on individuals, and even if that benefit is arbitrarily large.

The example is a perception game with two players. The formal model is provided in Appendix~\ref{sec:2-player}, which also includes Proposition~\ref{prop_existence_of_equilibrium} assuring the existence of an equilibrium.

\begin{example}\label{Example:2 player game}
%
Consider first the two-player Bayesian game depicted in Figure~\ref{fig:bayesian}, in which Nature draws a type profile from $\{u,d\}\times\{\ell,r\}$, and the payoffs are determined by the realized profile and the players' actions chosen from $\{U,D\}\times\{L,R\}$:

\begin{figure}[H]

\hspace*{\fill}
\begin{game}{2}{2}[$(u,\ell)$]
& $L$ & $R$\\
$U$ &$5,5$ &$0,0$\\
$D$ &$0,0$ &$0,0$
\end{game}\hspace*{20mm}
\begin{game}{2}{2}[$(u,r)$]
& $L$ & $R$\\
$U$ &$5,3$ &$0,4$\\
$D$ &$0,0$ &$0,1$
\end{game}\hspace*{\fill}%

\hspace*{\fill}
\begin{game}{2}{2}[$(d,\ell)$]
& $L$ & $R$\\
$U$ &$3,5$ &$0,0$\\
$D$ &$4,0$ &$1,0$
\end{game}\hspace*{20mm}
\begin{game}{2}{2}[$(d,r)$]
& $L$ & $R$\\
$U$ &$3,3$ &$0,4$\\
$D$ &$4,0$ &$1,1$
\end{game}\hspace*{\fill}%
\caption[]{A Bayesian game.%
}
\label{fig:bayesian}
\end{figure}%

Suppose also that player 1's (the row player) type is $u$ or $d$ with probability 0.5, and player 2's (the column player) type is $\ell$ or $r$ independently with
probability 0.5. In the unique Bayesian Nash equilibrium of this game, player 1 of type $u$ plays $U$ and player 1 of type $d$ plays $D$,
whereas player 2 of type $\ell$ plays $L$ and player 2 of type $r$ plays $R$. This yields the following interim expected utilities:

\begin{center}
  \begin{tabular}{| l || l | l | l | l | }
    \hline
    type: & $u$ & $d$ & $\ell$ & $r$ \\ \hline
    utility: & $2.5$ & $2.5$ & $2.5$ & $2.5$ \\
    \hline
  \end{tabular}
\end{center}

Consider now the perception game $G=(T,A,u,\beta)$, which is based on the Bayesian game of Figure~\ref{fig:bayesian} as follows:
The type space $T$ and action space $A$ are the same as in the Bayesian game, and the beliefs $\beta$ also correspond to the same commonly
known prior of 0.5 for each type. For each player $i$, the utility function is
$$u_i(t_1,t_2,a_1,a_2,\mu_i)=v_i(t_1,t_2,a_1,a_2)-w(\mu_i),$$
where $v(t_1,t_2,a_1,a_2)$ is the utility of player $i$ in the Bayesian game of Figure~\ref{fig:bayesian} and $w(\mu_i)$ is a privacy cost that depends
only on the belief $\mu_i$ of player $-i$ about $i$'s type (and denotes the probability that player $-i$ believes $i$ is of type $\ell$ or $u$).
Specifically, fix $w(\mu_i)\geq 0$ for all $\mu_i\in[0,1]$, with $w(0.5)=0$ and $w(0)=w(1)=1+\eps$ for some $\eps>0$. This corresponds to no
privacy cost when the posterior belief is 0.5, identical to the prior, and a nonnegative cost otherwise.

Now consider the strategy profile $\sigma$ in which $\sigma_1(u)=\sigma_1(d)=U$ and $\sigma_2(\ell)=\sigma_2(r)=L$. Also, fix the perception $\tau$ as $\tau_1(U)=\tau_2(L)=0.5$ and $\tau_1(D)=\tau_2(R)=0$. That is, the first pair of equalities is derived by consistency, whereas the latter equalities state that upon a deviation, the other player believes that the deviator is of type $R$ or $D$.

We claim that $(\sigma, \tau)$ is a perception equilibrium. To see this, consider for example player 1. Her expected utility under $\sigma$ is 5 under type $u$
and 3 under type $d$. Type $u$ clearly has no incentive to deviate. As for type $d$, a deviation will lead to a higher $v$ payoff of 4. However,
there will now be a privacy cost of $1+\eps$, since the belief of player 2 upon deviation will be 0. Thus, the total utility will be $4-1-\eps<3$,
so such a deviation is not profitable. A similar argument holds for player 2.

Finally, observe that the equilibrium $(\sigma, \tau)$ yields the following interim expected utilities:

\begin{center}
  \begin{tabular}{| l || l | l | l | l | }
    \hline
    type: & $u$ & $d$ & $\ell$ & $r$ \\ \hline
    utility: & $5$ & $3$ & $5$ & $3$ \\
    \hline
  \end{tabular}
\end{center}

Thus, even though the game $G$ consisted only of adding privacy costs to the Bayesian game of Figure~\ref{fig:bayesian},
the resulting equilibrium yields a strictly higher utility to both types of both players.
\end{example}

A related analysis of the interplay between social welfare and privacy policies appears in \cite{daughety2010public}. The main
conceptual difference between the two models is that in  \citeauthor{daughety2010public}'s model agents do not have privacy concerns but rather are concerned
about esteem, and wish to be perceived as the high type. Nevertheless, privacy policies do affect behavior in their model, and the effect may both enhance and diminish welfare.


\section{Concluding Remarks} \label{sec:conclusion}

Privacy concerns of individuals often depend on what the individuals expect others to learn from their own actions.
Thinking about privacy as the interplay between who the individual really is and what others know (or rather, believe) about him generates the following circle of reasoning
by him: Based on who I am, I will choose my action, which will then induce others to have a belief about who I am, which may then compel me to choose a different action. A perception game captures this circular reasoning and uses a fixed point---a perception equilibrium---to predict an outcome in such a setting.

Such equilibria, similarly to equilibria in signaling games, induce a full spectrum of privacy-related outcomes. At the extremes, they may be completely pooling, and hence others learn nothing about the individual and his privacy is kept intact, or they can be fully separating, in which case privacy is completely jeopardized.

Perception games provide a framework for the formal analysis of strategic settings in which actions and perceptions both play a role. Thus, they provide a unified framework for studying interactions among strategic players who care about the way others perceive them, which goes well beyond privacy concerns. In particular, perception games generalize models that have been used in the literature on social image and conspicuous consumption.

In this paper we restricted attention to a relatively simple class of perception games, where players' utilities depend on perceptions of their types but do not take perceptions over perceptions into account (in contrast to psychological games, where utilities may depend on a full hierarchy of beliefs). We also limited ourselves to a static, one-shot setting. We hope to study extensions of these restrictions in future work.

\subsection{Modeling privacy and optimal actions}\label{sec:modeling}

Our modeling choices---that only perceptions matter, that the setting is static, and that privacy concerns are a direct argument of the utility
function---are partially made for tractability, but additionally have some motivation in the context of privacy.
Let us first pause to think about the nature of privacy concerns. What is it that individuals do not want others to know, and why? It seems that such concerns arise at three different levels:
\begin{itemize}
\item
An embarrassment derived directly from others observing one's actions, even if that sheds no new light on how the individual is perceived. For example, privacy is important in public restrooms in many Western societies, and many people feel awkward if others watch them in the act.%
  \footnote{Similarly, although it is commonly assumed that consumption of adult entertainment is very widely spread, most consumers would feel uncomfortable if observed in the act of consumption \citep[see][]{edelman2009markets}.}
\item
A concern about revealing private information that can lead others to have the upper hand in some future (specified or unspecified) interaction. For example, concealing one's juvenile crime record is important for securing the impartiality of future potential employers.
\item
A concern about inferences others might make about an individual by observing his actions. Such inferences might relate to various aspects of his personality, such as his attitude towards risk, altruism, consumption preferences, etc. For example, in certain countries, men, who might otherwise appreciate the comfort of a skirt, may not wear skirts in public due to concerns regarding what people would deduce about their masculinity.%
\footnote{Some argue that the fundamental concern that is captured by this argument is in fact a proxy for concerns that are related to future payoffs in some unknown future interaction. Hence, the distinction between this argument and the previous one may boil down to how carefully the individual models his future interactions.}

\end{itemize}

In the standard economic and game-theoretic modeling paradigm, these three levels of concern translate to:
\begin{itemize}
\item
Disutility from others observing one's actions, which can be encoded into the utility function. This can be modeled in the context of a simple game with complete information.
\item
The strategic options available to others in the future due to finer information. This can be captured in a model of a repeated game
\citep[as in][]{mailath2006repeated}.
\item
What others learn about one's type, which can be captured in a model of incomplete information.
\end{itemize}

In this paper we study how privacy concerns impact individuals' decisions. In particular, our work is motivated by the two last levels of concern mentioned above. Our utility function in a single shot game, which accounts for perceptions, can be thought of as a reduced-form analysis of a repeated game whenever future interactions (future stage games) are ambiguous and will take different forms in
different circumstances. In that case players have no clear structure of the repeated game and reduce the implications of choosing one action or another to its effect on the perception. Another interpretation of the model is in the context of the last level above. Thinking of our model in that context captures the abstract idea that privacy, interpreted as the lack of additional information on one's type, is intrinsically important to the DM.

\subsection{Who is Big Brother?}

The passive player, BB, should be thought of in one of two ways. He may either be a real player, or, alternatively, he can merely be a figment of the DM's imagination. Technically this distinction has no bite, but conceptually it matters. The former interpretation sits well with our motivation for the model as a reduced-form model for some elaborate setting where BB takes future actions that are material to the DM. However, given the vagueness of how the information will be used, which course of action (or actions) will be taken, and how this will actually impact the DM's utility, one folds all of this into a perception of BB over the type of DM. In technical terms, the perception is a sufficient statistic for the future evolution of the game.

The latter interpretation is more in line with a different view of what privacy concerns are all about. In this case one should think of privacy not as a means for getting future benefits but rather as a virtue that underpins human dignity and freedom. In this case, what the DM finds unsettling is the fact that his type is exposed and so people can learn who he is.

\subsection{Modeling esteem vs.\ modeling privacy}\label{subsection percpetion vs esteem}

As discussed in the introduction the general approach underlying perception games, and specifically our modeling choice for capturing privacy concerns, is closely related to the literature on social image and conspicuous consumption. In both cases a player's utility depends, inter alia, on how he is perceived. However, a few distinctions are worth noting:
\begin{itemize}
\item
In our setting the DM does not want others to learn anything about his type. Compare this with the literature where in all models (as far as we know) there is always some `ideal' type (or more generally, types are ordered) and all players, independently of their type, would like to be perceived to be as close as possible to the ideal type.%
 \footnote{The `ideal' type may be referred to as the benevolent type, the wealthy type, the altruistic type, the bold type, etc.}
Furthermore, this notion of an ideal type is the same for all agents.%
\footnote{The recent work of \cite{friedrichsen2013image} departs from this strong homogeneity assumption.} In fact, if others learn that the DM's type is an ideal one then he is happy with that. This renders types that are close to ideal as averse to privacy in our sense, and is quite different from our notions of privacy and identity concerns.
\item
A technical distinction worth noting is that in our general model of perception games, perceptions are captured by distributions over types. However, in most models on social image, the type set is modeled as an interval on the real line and a perception is simply captured by a single type (and not a distribution). This is not a limiting issue in the context of a separating equilibrium.
On the other hand, the primitive we allude to is the distribution over beliefs, and we do not restrict the discussion to utility functions which treat similarly all beliefs with the same expectation. To demonstrate the limit of the `expected type' approach consider the imposed  equivalence between the following two scenarios: (1) Big Brother does not know what one's political views are and assigns a uniform probability to any of the views on the interval spanning from extreme left to extreme right, and (2) Big Brother is confident the individual is an extremist, either left or right, and assigns each of these cases a probability of $0.5$.
\end{itemize}





\newpage
\begin{center}
\begin{Large}
\textbf{Appendix}
\end{Large}
\end{center}

\appendix

\section{Two-Player Perception Games}\label{sec:2-player}
A {\it two-player perception game} is a tuple $(T_i,A_i,\beta_i,u_i)_{i=1}^2$ defined as follows. For each player $i\in\{1,2\}$,
\begin{itemize}
\item
$T_i$ is a finite type space,
\item
$A_i$ is a finite action space,
\item
$u_i:T_i \times T_{-i} \times A_i \times A_{-i} \times \Delta(T_i) \to \R$.
\item
$\beta_i: T_i \to \Delta(T_{-i})$ is a {\em belief} for player $i$.
\end{itemize}

Note that the two-player perception game reduces to our initial model of a perception game, hereinafter known as a {\it single-player perception game}, whenever $T_2$ and $A_2$ are singleton sets.

Similarly to the single-player perception game, a {\em strategy} for player $i$ is a function $\sigma_i: T_i \to \Delta(A_i)$.
A perception for player $i$ is a function $\tau_i:T_i \times T_{-i} \times A_i \to \Delta(T_i)$, which represents what $i$ believes $-i$'s belief (of some type $t_{-i}$) is over $i$'s own type.

Given a tuple $(\beta_i,\tau_i)_{i=1}^2$, the expected payoff to player $i$, of a given type $t_i$ when players use strategies $(\sigma_j)_{j=1}^2$, is
$U_i((\sigma_j)_{j=1}^2,\beta_i,\tau_i)(t_i)$ (also denoted $U_i((\sigma_j)_{j=1}^2, \tau_i)(t_i)$ or $U_i((\sigma_j)_{j=1}^2)(t_i)$   when the beliefs and perceptions are clear), computed as follows:
\begin{align*}
&U_i((\sigma_j)_{j=1}^2)(t_i)  \\
&=\sum_{a\in A_i,t_{-i}\in T_{-i},b\in A_{-i}}\sigma_i(t_i)(a)\cdot \beta_i(t_i)(t_{-i})\cdot \sigma_{-i}(t_{-i})(b)\cdot u_i(t_i,t_{-i},a,b,\tau_i(t_i,t_{-i},a)).
\end{align*}

Let us consider the perceptions of player $i$ were she aware of the beliefs of player $-i$. Let $P_{t_i}(t_{-i},b) =
\beta_i(t_i)(t_{-i}) \cdot \sigma_{-i}(t_{-i})(b)$
be the probability that $i$, of type $t_i$, assigns to $-i$ being of type $t_{-i}$ and to this type performing the action $b$.
Thus, $P_{t_i}(b) = \sum_{t_{-i}\in T_{-i}}P(t_{-i},b)$ is the probability she assigns to $-i$ performing the action $b$ and $P_{t_i}(t_{-i}| b)= \frac{P_{t_i}(t_{-i},b)}{P_{t_i}(b)}$ (where $\frac{0}{0}$ is interpreted as $0$) is the conditional probability that $i$, of type $t_i$, assigns to $-i$ being of type $t_{-i}$ upon seeing the action $b$ of $-i$. Symmetrically, let $P_{t_{-i}}(\bar t_i| a)$ be the conditional probability that $-i$, of type $t_{-i}$, assigns to $i$ being of type $\bar t_i$ upon seeing the action $a$ of $i$.

Player $i$ is {\em consistent} if for all $t_{-i}$, $t_i$, and 
$a\in \cup_{\tilde t_i \in \supp(\beta_{-i}(t_{-i}))}\supp(\sigma_i(\tilde t_i))$, it holds that
$\tau_i(t_i,t_{-i},a)(\bar t_i) = P_{t_{-i}}(\bar t_i|a)$ for all $\bar t_i$. In words, the ex post perception of one player over the type of the other player must be Bayesian consistent.

\bdefi
A {\em perception equilibrium} is a tuple 
$(\sigma_i,\tau_i)_{i=1}^2$ such that both players are consistent and for any $i$ and $t_i\in T_i$,
the strategy
$\sigma_i$ is a best reply: $U_i((\sigma_i)_{i=1}^2)(t_i) \ge
U_i(\hat\sigma_i,\sigma_{-i})(t_i) \ \forall \hat\sigma_i:T_i \to \Delta(A_i)$.
\edefi

As with the single-player perception game, we assume continuity in the last argument of the utility function:
\begin{assumption}\label{assumption:continuity_in_games}
For each $i$, the utility function $u_i:T_i \times T_{-i} \times A_i \times A_{-i} \times \Delta(T_i) \to \R$ is continuous in the last argument.
\end{assumption}

\begin{proposition}\label{prop_existence_of_equilibrium}
A perception equilibrium exists in every two-player perception game that satisfies Assumption \ref{assumption:continuity_in_games}.
\end{proposition}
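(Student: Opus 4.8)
The plan is to obtain the equilibrium as a Kakutani fixed point of a best-reply correspondence on the strategy space, paralleling the proof of Proposition~\ref{prop_existence}, the single new ingredient being two coupled players. The strategy space $\Sigma=\prod_i\prod_{t_i\in T_i}\Delta(A_i)$ is a product of simplices, hence nonempty, compact and convex. Given a profile $\sigma$, the payoff $U_i((\sigma_j)_{j})(t_i)$ depends on the perception $\tau_i(t_i,t_{-i},a)$, which by the consistency requirement is pinned down from $\sigma_i$ and $\beta_{-i}$ via Bayes' rule at every action $a$ played with positive probability by some type of $i$, and is otherwise free. So the core of the argument is to select, for each $\sigma$, a specific consistent perception $\tau^\sigma$, and then to set $\Phi(\sigma)=\prod_i BR_i(\sigma_{-i};\tau^\sigma)$, where $BR_i$ collects the strategies of $i$ maximizing $U_i(\cdot,\sigma_{-i};\tau^\sigma)$. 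Since for fixed beliefs and fixed opponent play $U_i$ is linear in $\sigma_i'$ and separates across types, each $BR_i$ is convex-valued; and any fixed point $\sigma^*\in\Phi(\sigma^*)$, together with $\tau^{\sigma^*}$, is by construction a pair in which beliefs are consistent and each type best-responds, i.e.\ a perception equilibrium.

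The one nontrivial point, just as in the single-player setting, is upper hemicontinuity of $\Phi$. The difficulty is that the Bayesian posterior $P_{t_{-i}}(\cdot\,|a)$ is genuinely discontinuous in $\sigma$ precisely where the support of $\sigma_i$ changes: as $a$ passes from off-path to on-path the induced belief can jump, so no globally continuous single-valued selection $\tau^\sigma$ exists and a careless choice of off-path beliefs destroys the closed-graph property. I would resolve this by restricting to the $\eps$-interior $\Sigma^\eps=\prod_i\prod_{t_i}\{p\in\Delta(A_i):p(a)\ge \eps/|A_i|\ \forall a\}$. On $\Sigma^\eps$ every action is on-path, so consistency determines $\tau^\sigma$ uniquely, and $\sigma\mapsto\tau^\sigma$ is continuous because every relevant denominator $P_{t_{-i}}(a)$ is bounded below by a positive constant. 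Combined with Assumption~\ref{assumption:continuity_in_games}, the objective $U_i(\sigma_i',\sigma_{-i};\tau^\sigma)$ is jointly continuous, so by Berge's maximum theorem each $BR_i$ is nonempty, compact-valued and upper-hemicontinuous; it is convex-valued as noted. Kakutani's theorem then yields a fixed point $\sigma^\eps$ of the constrained game, with associated consistent $\tau^{\sigma^\eps}$.

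Finally I would take $\eps\to 0$. By compactness of $\Sigma$ and of the finite product of belief simplices, pass to a subsequence along which $\sigma^\eps\to\sigma^*$ and $\tau^{\sigma^\eps}\to\tau^*$. Two things must be verified. First, consistency of $(\sigma^*,\tau^*)$: for any $a$ on-path under $\sigma^*$ the denominator $P_{t_{-i}}(a)$ converges to a strictly positive limit, so the Bayes ratios converge and $\tau^*$ agrees with the posterior induced by $\sigma^*$, while off-path actions carry no requirement and their limiting beliefs are harmless. Second, the best-reply property: writing $g_i^\eps(t_i,a)$ for the expected payoff of type $t_i$ from action $a$ under $(\sigma_{-i}^\eps,\tau^{\sigma^\eps})$, the floor constraint forces every action played above the floor to maximize $g_i^\eps(t_i,\cdot)$; since $\eps/|A_i|\to 0$, any $a\in\supp(\sigma_i^*(t_i))$ is played strictly above the floor for small $\eps$ and hence maximizes $g_i^\eps(t_i,\cdot)$, and by the (finite, hence uniform) convergence $g_i^\eps\to g_i^*$ it maximizes $g_i^*(t_i,\cdot)$ as well. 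Thus $\sigma^*$ places weight only on payoff-maximizing actions given $\tau^*$, which is exactly the best-reply condition, and $(\sigma^*,\tau^*)$ is a perception equilibrium. I expect this last step, passing from constrained best replies in the perturbed games to an unconstrained best reply in the limit, to be the most delicate bookkeeping, while the conceptual crux remains the discontinuity of beliefs, tamed by the $\eps$-perturbation.
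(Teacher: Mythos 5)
Your argument is correct, but it follows a genuinely different route from the paper's. The paper never perturbs the strategy space: instead, for \emph{every} profile $\sigma$ (boundary included) it constructs a particular consistent perception $\tau^{\sigma}$ that equalizes the payoffs of all actions that are weakly optimal under \emph{some} consistent perception (Claim~\ref{claim_exists_privacy_concern}, which rests on the observation that each attainable-payoff set is a closed interval and that these intervals pairwise intersect, hence share a common point), and then shows directly that the induced best-reply correspondence is upper-hemicontinuous (Lemma~\ref{lemma:uhc}, via Bolzano--Weierstrass on the perception sequence and Claim~\ref{claim_3}); Kakutani is invoked once. You instead tame the discontinuity of Bayes' rule by restricting to the $\eps$-interior, where the consistent perception is unique and continuous in $\sigma$, apply Berge and Kakutani for each $\eps$, and pass to the limit, checking consistency on-path and transferring the constrained optimality of above-floor actions to unconstrained optimality in the limit. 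Both arguments are sound. The paper's buys a single fixed-point application and no limiting bookkeeping, at the cost of the somewhat delicate equalizing-perception selection; yours is more routine (it mirrors the standard tremble construction for sequential equilibrium) and delivers a slightly stronger conclusion for free, namely off-path beliefs that arise as limits of Bayesian posteriors under trembles, at the cost of the extra limit argument, whose key step (any $a\in\supp(\sigma_i^*(t_i))$ is eventually above the floor and hence maximizes $g_i^{\eps}(t_i,\cdot)$, and uniform convergence of the finitely many values $g_i^{\eps}$ carries this to the limit) you handle correctly.
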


\section{Perception games as a variant of signaling games}\label{sec:signaling games analogy}

As suggested in the introduction, perception games are a variant of signaling games. The initial analogy is that of a signaling game played between two players, with the DM serving as a sender and BB serving as a receiver whose utility is determined by a proper scoring rule. Let us examine this analogy and its implication for equilibrium existence.

In a signaling game the receiver has one single belief in each of his information sets. In particular, as the receiver does not know the type of the sender the model dictates that the receiver's belief is independent of the type of the sender. In our setting this is not required but is an outcome of the equilibrium concept, and even then holds only for non-zero probability events. Specifically, off the equilibrium path different types of sender (DM) may have different perceptions.

Thus, in order to take the exercise of embedding perception games within the framework of signaling games we must go beyond a single sender and a single receiver game. The correct model is one of a signaling game with multiple receivers, one for each type of the sender. The sender of type $t$ sends a single message to $T$ receivers and only cares about the ``action'' of receiver $t$. The perception associated with type $t$ of the DM in a perception game corresponds to the belief of receiver $t$ in the signaling game. Both perception equilibria and Perfect Bayesian Equilibria (PBE) require that perceptions/beliefs be ``correct" in equilibrium, i.e., on the equilibrium path, and this is the consistency requirement in both perception equilibria and in the PBE of the corresponding signaling game. Thus, the PBE of the resulting signaling game will perfectly match up with a perception equilibrium of the original perception game. Note that in such a multi-receiver signaling game the diversity of off-equilibrium perceptions sits well with the interpretation of BB as a real player (or players).

Next, one can ask whether this implies the equilibrium existence result. To the best of our understanding it does not. The reason is that the resulting signaling game is not a finite game, since the receiver has an infinite (though compact) action space: the set of all probability distributions over sender types. We were not able to find general existence results for PBE in this class of games, and in fact \cite{manelli1996cheap}
demonstrates a signaling game with no PBE.%
\footnote{Theorem 1 in \cite{manelli1996convergence}
 provides an existence result for the special case of binary sender types. It does not extend beyond binary types as it assumes that the action set of the receiver is a real number, which can correspond to a probability in the binary case but not to distributions over more than two actions.}

\section{Proofs of Propositions~\ref{prop_existence} and~\ref{prop_existence_of_equilibrium}}\label{sec:proofs}
The proof, not surprisingly, hinges on Kakutani's fixed-point theorem. As usual, this involves defining an appropriate best-response correspondence, showing that it is upper-hemicontinuous and convex-valued, and then invoking Kakutani's theorem. The crux of our proof lies in fixing the perception $\tau$ as a function of the strategy, such that the induced best-response correspondence is indeed
upper-hemicontinuous. More specifically, in defining the best-response correspondence we construct a particular consistent $\tau^\sigma$ for each
strategy profile $\sigma$ in the domain, for which we find the set of best-responses.

We will prove Proposition~\ref{prop_existence_of_equilibrium}. Proposition~\ref{prop_existence} is simply a corollary,
with $\abs{T_2}=\abs{A_2}=1$.

Fix a perception game $(T_i,A_i,\beta_i, u_i)_{i=1}^2$, a strategy profile $\sigma=(\sigma_1,\sigma_2)$, a player $i\in\{1,2\}$, and a type $t\in T_i$
of player $i$.\footnote{We use $t\in T_i$ rather than $t_i$ to slightly shorten notation.}
Let ${\cal T}_t^\sigma$
be the set of all perceptions for type $t$ that are consistent given $\sigma$. Formally,
\begin{align*}
{\cal T}_t^\sigma &= \Large\{\tau_i:T_i\times T_{-i}\times A_i \to\Delta(T_i): \tau_i(t,t_{-i},a)(\bar t) = P_{t_{-i}}(\bar t|a) \\
&~~~~~~ \forall \bar t \in T_i,
\forall t_{-i}\in T_{-i}, \forall a\in \cup_{\tilde t_i \in \supp(\beta_{-i}(t_{-i}))}\supp(\sigma_i(\tilde t_i))\Large\}.
\end{align*}

Let $B_{t}^\sigma\subseteq A$ be defined as follows:
$$
B_{t}^\sigma = \large\{a\in A_i:~ \exists \mu_i \in {\cal T}_t^\sigma ~\mbox{s.t.} \
U_i(\sigma_i^{a},\sigma_{-i},\beta_i,\mu_i)(t)\geq U_i(\sigma^{a'},\sigma_{-i}\beta_i,\mu_i)(t) \ \forall a' \in A \large\},
$$
where $\sigma_i^a$ is the pure strategy of player $i$ that puts weight 1 on the action $a$.
$B_{t}^\sigma$ consists of all actions $a\in A_i$ that are weakly optimal for a player $i$ of type $t$ for some consistent perception of type $t$. In particular, any $a\not\in B_t^\sigma$ is never optimal as long as perceptions are consistent.

\begin{claim}\label{claim_exists_privacy_concern}
There exists
$\tau_{t}^{\sigma} \in {\cal T}_t^\sigma$ for which $U_i(\sigma^a,\sigma_i,\beta_i,\tau_{t}^{\sigma})(t)=U_i(\sigma^{a'},\sigma_i,\beta_i,\tau_{t}^{\sigma})(t)$ for all $a,a'\in B_t^\sigma$.
\end{claim}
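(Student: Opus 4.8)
The plan is to reduce the claim to an elementary statement about the \emph{ranges} of utilities that type $t$ can obtain, action by action, as the consistent perception varies. For each $a\in A_i$, consider the set of values $U_i(\sigma_i^{a},\sigma_{-i},\beta_i,\mu_i)(t)$ attainable as $\mu_i$ ranges over ${\cal T}_t^\sigma$. The first thing I would establish is that this set is a compact interval, say $[\underline{u}(a),\overline{u}(a)]$, and that these intervals can be realized \emph{independently} across actions. The key structural observation is that the constraints defining ${\cal T}_t^\sigma$ only pin down $\tau_i(t,t_{-i},a)$ for those $(t_{-i},a)$ with $a$ on-path (in the support of some type's strategy), leave $\tau_i(t,t_{-i},a)$ entirely free otherwise, and never link the perception at one action to the perception at another. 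Since $U_i(\sigma_i^{a},\sigma_{-i},\beta_i,\cdot)(t)$ is a $\beta_i$-weighted sum over $t_{-i}$ of terms each continuous in the free argument $\tau_i(t,t_{-i},a)\in\Delta(T_i)$ (Assumption~\ref{assumption:continuity_in_games}), and $\Delta(T_i)$ is convex hence connected, the attainable values form an interval; it degenerates to a point exactly when $a$ is on-path against every $t_{-i}$ in the support of $\beta_i(t)$.

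Next I would characterize $B_t^\sigma$ in these terms. Because the ranges decouple across actions, $a\in B_t^\sigma$ (i.e.\ $a$ can be made weakly optimal for some consistent perception) holds precisely when $a$'s maximal value dominates the values to which the remaining actions can be held down, that is $\overline{u}(a)\geq \max_{a'\neq a}\underline{u}(a')$. Writing $m=\max_{a'}\underline{u}(a')$, a short case split according to whether or not $a$ is the unique maximizer of $\underline{u}$ shows this is equivalent to $\overline{u}(a)\geq m$. Hence $B_t^\sigma=\{a:\overline{u}(a)\geq m\}$; in particular any maximizer of $\underline{u}$ lies in $B_t^\sigma$, so $B_t^\sigma\neq\emptyset$.

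Finally I would construct $\tau_t^\sigma$ by forcing the common value $m$ at every action of $B_t^\sigma$. For each $a\in B_t^\sigma$ we have $\underline{u}(a)\leq m\leq \overline{u}(a)$, so by the interval property and the intermediate value theorem there is a consistent perception at $a$ realizing utility exactly $m$; when the range at $a$ is degenerate the squeeze $\underline{u}(a)\leq m\leq\overline{u}(a)=\underline{u}(a)$ already gives $\underline{u}(a)=m$, so the forced (on-path) value is automatically $m$ and nothing is lost. I then select such a perception independently for each $a\in B_t^\sigma$ and an arbitrary consistent perception on the remaining actions. Assembling these coordinates yields $\tau_t^\sigma\in{\cal T}_t^\sigma$ with $U_i(\sigma_i^{a},\sigma_{-i},\beta_i,\tau_t^\sigma)(t)=m$ for every $a\in B_t^\sigma$, which is exactly the asserted equality (indeed, one checks that all actions outside $B_t^\sigma$ then receive strictly less than $m$, so $B_t^\sigma$ is precisely the set of best replies, which is what the fixed-point argument needs).

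The main obstacle is the first step: showing that the per-action attainable-utility set is a genuine interval, so that the shared intermediate value $m$ is actually hit, and that these ranges truly decouple across actions. Both points rest on continuity of $u_i$ in the perception and on the fact that consistency imposes no cross-action restrictions; everything after that is elementary bookkeeping, the only delicate part being the ``$a'\neq a$'' case analysis used to identify $B_t^\sigma$ with $\{a:\overline{u}(a)\geq m\}$.
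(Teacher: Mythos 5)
Your proposal is correct and follows essentially the same route as the paper's proof: both rest on the observation that, by continuity and the fact that consistency constrains perceptions action by action, the attainable utilities at each action form a closed interval whose realizations decouple across actions, and both then select a value common to the intervals of all $a\in B_t^\sigma$ and assemble the corresponding per-action perceptions into a single consistent $\tau_t^\sigma$. The only difference is that you exhibit the common value explicitly as $m=\max_{a'}\underline{u}(a')$ via the characterization $B_t^\sigma=\{a:\overline{u}(a)\geq m\}$, whereas the paper obtains a common point non-constructively from the pairwise intersection of the closed intervals.
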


\begin{proof}
Assumption \ref{assumption:continuity} implies that for any $a\in B_t^\sigma$,
the set $U_i^a\eqdef\{U_i(\sigma^a,\sigma_i,\beta_i,\mu_i)(t): \mu_i \in {\cal T}_t^\sigma \}$ is a closed interval in $\R$. Furthermore, by the construction of $B_t^\sigma$, the pairwise intersection $U_i^a\cap U_i^{a'}\neq \emptyset$ for any
$a,a'\in B_t^\sigma$. Thus, the intersection of all such intervals is nonempty. Let  $p\in \R$ denote some point in the intersection of all of the intervals. For each $a\in B_t^\sigma$,
this point is attained at some $\mu^a=\mu^a(t,t_{-i},a) \in {\cal T}_t^\sigma$, so
\begin{align*}
p&=U_i(\sigma^a,\sigma_{-i},\beta_i,\mu^a)(t) \\
& \sum_{t_{-i}\in T_{-i},b\in A_{-i}} \beta_i(t)(t_{-i})\cdot \sigma_{-i}(t_{-i})(b)\cdot u_i(t,t_{-i},a,b,\mu^a(t,t_{-i},a)).
\end{align*}


Define $\tau_{t}^{\sigma}$ as $\tau_{t}^{\sigma}(t,t_{-i},a)\eqdef \mu^a(t,t_{-i},a)$ for every $t_{-i}\in T_{-i}$ and $a\in A_i$, and note that $\tau_{t}^{\sigma} \in {\cal T}^\sigma_t$ as well. Then for each $a\in B_t^\sigma$ it holds that
\begin{align*}
U_i(\sigma^a,\sigma_{-i},\beta_i,\tau_{t}^{\sigma})(t)
&= \sum_{t_{-i}\in T_{-i},b\in A_{-i}} \beta_i(t)(t_{-i})\cdot \sigma_{-i}(t_{-i})(b)\cdot u_i(t,t_{-i},a,b,\tau_t^\sigma(t,t_{-i},a))\\
&=p
\end{align*}
as desired.
\end{proof}

For any strategy profile $\sigma$ and player $i$ fix the perception $\tau^\sigma_i$ defined as
$\tau_{i}^{\sigma}(t,t_{-i},a)\eqdef \tau_t^\sigma(t,t_{-i},a)$ for every $t\in T_i$, $t_{-i}\in T_{-i}$ and $a\in A_i$  satisfying
$U_i(\sigma^a,\sigma_i,\beta_i,\tau_{i}^{\sigma})(t)=U_i(\sigma^{a'},\sigma_i,\beta_i,\tau_{i}^{\sigma})(t)$ for all $a,a'\in B_t^\sigma$. Such a perception is guaranteed to exist by Claim \ref{claim_exists_privacy_concern}.  Define the following correspondence
$\br^i_t$:
$$
\br^i_{t}(\sigma)
=\arg\max_{\mu \in\Delta(A_i)}\left\{U_i\left(\mu, \sigma_{-i},\beta_i,\tau_i^{\sigma}\right)(t)\right\}.
$$

\begin{claim}\label{claim_br_convex}
$\br^i_t(\sigma)$ is nonempty and convex for every 
$\sigma=(\sigma_1,\sigma_2)$, where $\sigma_i:T_i\to A_i$.
\end{claim}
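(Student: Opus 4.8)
The plan is to exploit the fact that the perception $\tau_i^\sigma$ appearing in the definition of $\br^i_t(\sigma)$ is held \emph{fixed}: it is determined by $\sigma$ through Claim~\ref{claim_exists_privacy_concern} and does \emph{not} vary with the candidate deviation $\mu$. Once $\tau_i^\sigma$ is frozen, the objective $U_i(\mu,\sigma_{-i},\beta_i,\tau_i^\sigma)(t)$ becomes an affine (indeed linear) function of $\mu$, and both assertions follow from elementary facts about maximizing a linear functional over a simplex.

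First I would write the objective out explicitly. Substituting a generic $\mu\in\Delta(A_i)$ for $\sigma_i(t)$ in the payoff formula yields
$$U_i(\mu,\sigma_{-i},\beta_i,\tau_i^\sigma)(t)=\sum_{a\in A_i}\mu(a)\,c_a,$$
where $c_a\eqdef\sum_{t_{-i}\in T_{-i},\,b\in A_{-i}}\beta_i(t)(t_{-i})\,\sigma_{-i}(t_{-i})(b)\,u_i\big(t,t_{-i},a,b,\tau_i^\sigma(t,t_{-i},a)\big)$. The coefficients $c_a$ depend only on $\sigma_{-i}$, $\beta_i$, and the fixed perception $\tau_i^\sigma$; crucially they do not depend on $\mu$. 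Hence $\mu\mapsto U_i(\mu,\sigma_{-i},\beta_i,\tau_i^\sigma)(t)=\langle\mu,c\rangle$ is linear in $\mu$.

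For nonemptiness, I would observe that $\Delta(A_i)$ is a nonempty compact set (the standard simplex over the finite set $A_i$) and that a linear function is continuous, so by the Weierstrass theorem the maximum is attained; thus $\br^i_t(\sigma)\neq\emptyset$. For convexity, let $M=\max_{\mu\in\Delta(A_i)}\langle\mu,c\rangle$. Then $\br^i_t(\sigma)=\{\mu\in\Delta(A_i):\langle\mu,c\rangle=M\}$ is the intersection of the convex set $\Delta(A_i)$ with the hyperplane $\{\mu:\langle\mu,c\rangle=M\}$, hence convex. Equivalently, if $\mu_1,\mu_2\in\br^i_t(\sigma)$, then by linearity $\langle\lambda\mu_1+(1-\lambda)\mu_2,c\rangle=\lambda M+(1-\lambda)M=M$ for every $\lambda\in[0,1]$, and the convex combination lies in $\Delta(A_i)$, so it is again a maximizer.

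There is no genuine obstacle once the linearity observation is in hand; the only point requiring care — and the reason the construction in Claim~\ref{claim_exists_privacy_concern} must precede this claim — is that $\tau_i^\sigma$ is treated as a constant while $\mu$ varies. If one instead let the perception respond to $\mu$ (as consistency demands on the equilibrium path), the objective would cease to be linear and convexity could fail. It is precisely this decoupling of the deviation $\mu$ from the induced perception that keeps $\br^i_t$ convex-valued and, later, upper-hemicontinuous, which is what the Kakutani argument requires.
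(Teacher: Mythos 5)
Your proof is correct and follows essentially the same route as the paper's: both arguments rest on the observation that, with the perception $\tau_i^{\sigma}$ frozen, $U_i$ is linear (hence concave) in the mixed action $\mu$, from which nonemptiness and convexity of the argmax over the simplex $\Delta(A_i)$ follow immediately. Your version merely spells out the linear-functional/hyperplane details that the paper leaves implicit.
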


\begin{proof}
Observe that for any $\sigma'(t),\sigma''(t)\in\Delta(A)$, a fixed  perception  $\tau^\sigma$, and $\lambda\in(0,1)$, it holds that
\begin{align*}
&U_i\left(\lambda\sigma'(t)+(1-\lambda)\sigma''(t), \sigma_{-i},\beta_i,\tau^{\sigma}\right)(t)\\
&~~~~~=\lambda U_i\left(\sigma'(t), \sigma_{-i},\beta_i,\tau^{\sigma}\right)(t)+(1-\lambda)U_i\left(\sigma''(t), \sigma_{-i},\beta_i,\tau^{\sigma}\right)(t).
\end{align*}
In particular, this implies that $U_i$ is concave in its first argument, which in turn implies that the set $\br^i_t(\sigma)$ is nonempty and convex for every $\sigma$.
\end{proof}

Consider the following correspondence 
defined by
$$\hat\br_{t}^{\sigma_{-i}}(\nu) =\arg\max_{\sigma'(t)\in\Delta(A)}\left\{U_i\left(\sigma'(t),\sigma_{-i},\beta_i,\nu\right)(t)\right\}.$$

Note in particular that $\hat\br_t^{\sigma_{-i}}(\tau_i^\sigma) = \br_t^i(\sigma)$.
As with $\br_t(\sigma)$, the set $\hat\br_t(\nu)$ is also nonempty
and convex.

\begin{claim}\label{claim_3}
 If $\mu_t \in {\cal T}_t^\sigma$ and $a\in\hat\br_t^{\sigma_{-i}}(\mu_t)$, then $a\in\br_t^i(\sigma)$.
\end{claim}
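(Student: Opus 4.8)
The plan is to reduce the claim to the set inclusion $B_t^\sigma \subseteq \br_t^i(\sigma)$, after first observing that the hypothesis already forces $a \in B_t^\sigma$. Indeed, since $\mu_t \in {\cal T}_t^\sigma$ is a consistent perception and $a \in \hat\br_t^{\sigma_{-i}}(\mu_t)$, the pure action $a$ is (weakly) optimal under the consistent perception $\mu_t$; this is exactly the defining condition for membership in $B_t^\sigma$, so $a \in B_t^\sigma$. It therefore suffices to show that every action in $B_t^\sigma$ is a best reply under the \emph{specific} perception $\tau_i^\sigma$.

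To establish $B_t^\sigma \subseteq \br_t^i(\sigma)$, I would invoke Claim~\ref{claim_exists_privacy_concern}, which produces a common value
$$p = U_i(\sigma^{a'},\sigma_{-i},\beta_i,\tau_i^\sigma)(t) \quad \text{for every } a' \in B_t^\sigma.$$
The remaining content is to show that $p$ is in fact the \emph{maximal} attainable utility under $\tau_i^\sigma$, and not merely the common value across $B_t^\sigma$.

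This is the crux, and the step I expect to be the only real obstacle: a priori some action outside $B_t^\sigma$ might yield strictly more than $p$ against $\tau_i^\sigma$. To rule this out I would use the fact that $\tau_i^\sigma$ is itself consistent, i.e.\ $\tau_i^\sigma \in {\cal T}_t^\sigma$. Let $a^*$ be any action maximizing $U_i(\sigma^{a^*},\sigma_{-i},\beta_i,\tau_i^\sigma)(t)$ over all of $A_i$. Then $a^*$ is weakly optimal under the consistent perception $\tau_i^\sigma$, so by the very definition of $B_t^\sigma$ we get $a^* \in B_t^\sigma$, and hence its value equals $p$. Thus $p$ is the global maximum under $\tau_i^\sigma$.

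Combining the pieces, every $a' \in B_t^\sigma$ attains the value $p$, which is the maximum of $U_i(\cdot,\sigma_{-i},\beta_i,\tau_i^\sigma)(t)$, so $\sigma^{a'} \in \arg\max_{\mu\in\Delta(A_i)} U_i(\mu,\sigma_{-i},\beta_i,\tau_i^\sigma)(t) = \br_t^i(\sigma)$, using the already-noted identity $\hat\br_t^{\sigma_{-i}}(\tau_i^\sigma) = \br_t^i(\sigma)$. In particular, the $a$ from the hypothesis lies in $B_t^\sigma$, hence in $\br_t^i(\sigma)$, as required. The only point requiring care is the standard reading of ``$a \in \hat\br$'' as ``$\sigma^a$ is a maximizer'', which is harmless here since the objective is linear in the mixed action (by the computation in Claim~\ref{claim_br_convex}) and the argmax is the convex hull of the optimal pure actions.
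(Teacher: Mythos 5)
Your proof is correct and follows essentially the same route as the paper's: both arguments hinge on noting that the hypotheses place $a$ in $B_t^\sigma$, that any maximizer under the consistent perception $\tau_i^\sigma$ must also lie in $B_t^\sigma$, and that Claim~\ref{claim_exists_privacy_concern} equalizes the utilities of all actions in $B_t^\sigma$ under $\tau_i^\sigma$, forcing $a$ to attain the maximum. The only cosmetic difference is that you exhibit a pure-action maximizer $a^*$ directly, whereas the paper takes an element $b\in\br_t^i(\sigma)$ supplied by Claim~\ref{claim_br_convex}; the content is identical.
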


\begin{proof}
The conditions in the claim directly imply that $a\in B_t^\sigma$. 
By Claim~\ref{claim_br_convex} there exists some $b\in\br_t^i(\sigma)$. Note that this implies that $b\in B_t^\sigma$.
By construction, since both $a$ and $b$ are in $B_t^\sigma$, it must be the case that $U_i(\sigma^a,\sigma_{-i},\beta_i,\tau^\sigma_i)(t) = U_i(\sigma^b,\sigma_{-i},\beta_i,\tau^\sigma_i)(t)$.
But $b\in\br_t^i(\sigma)$ then implies that $a\in\br_t^i(\sigma)$.
\end{proof}

\begin{lemma}\label{lemma:uhc}
$\br_t^i(\sigma)$ is upper-hemicontinous.
\end{lemma}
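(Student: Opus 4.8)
The plan is to establish upper-hemicontinuity through its sequential (closed-graph) characterization, which is legitimate here because $\br_t^i$ takes values in the compact set $\Delta(A_i)$. The first step is to record the shape of the correspondence. Since $U_i(\cdot,\sigma_{-i},\beta_i,\tau_i^\sigma)(t)$ is linear in its first argument, and since $\tau_i^\sigma$ is constructed in Claim~\ref{claim_exists_privacy_concern} so that every action in $B_t^\sigma$ attains the common optimal value while every action outside $B_t^\sigma$ is strictly suboptimal (any pure maximizer under the consistent perception $\tau_i^\sigma$ is weakly optimal under a consistent perception and hence lies in $B_t^\sigma$), we have $\br_t^i(\sigma)=\Delta(B_t^\sigma)$, the set of mixtures supported on $B_t^\sigma$. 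Consequently the lemma reduces to a statement about supports: if $\sigma^n\to\sigma$, $x^n\in\br_t^i(\sigma^n)$, and $x^n\to x$, I must show $\supp(x)\subseteq B_t^\sigma$, i.e.\ that membership in $B_t^\sigma$ is preserved under passage to the limit.

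To prove this, fix $a\in\supp(x)$. Since $x^n\to x$ we have $x^n(a)>0$, hence $a\in\supp(x^n)\subseteq B_t^{\sigma^n}$ for all large $n$. By definition of $B_t^{\sigma^n}$ there is a consistent perception $\mu^n\in{\cal T}_t^{\sigma^n}$ under which $a$ is weakly optimal, i.e.\ $a\in\hat\br_t^{\sigma_{-i}^n}(\mu^n)$. The perceptions $\mu^n$ lie in the compact space of maps $T_i\times T_{-i}\times A_i\to\Delta(T_i)$, so after passing to a subsequence we may assume $\mu^n\to\mu^*$. I would then verify two things: (i) $\mu^*\in{\cal T}_t^\sigma$, and (ii) $a\in\hat\br_t^{\sigma_{-i}}(\mu^*)$. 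Given these, Claim~\ref{claim_3} immediately yields that $\sigma^a\in\br_t^i(\sigma)$, and since this holds for every $a\in\supp(x)$ while $\br_t^i(\sigma)$ is the argmax of a linear functional, the mixture $x=\sum_a x(a)\sigma^a$ itself lies in $\br_t^i(\sigma)$, completing the argument.

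For (ii), optimality passes to the limit by joint continuity of $U_i(\sigma^{a'},\sigma_{-i},\beta_i,\mu)(t)$ in $(\sigma_{-i},\mu)$---it is polynomial in $\sigma_{-i}$ and continuous in $\mu$ by Assumption~\ref{assumption:continuity_in_games}---applied to the inequalities $U_i(\sigma^a,\sigma_{-i}^n,\beta_i,\mu^n)(t)\ge U_i(\sigma^{a'},\sigma_{-i}^n,\beta_i,\mu^n)(t)$. The main obstacle, and the only genuinely delicate point, is (i): the consistency correspondence $\sigma\mapsto{\cal T}_t^\sigma$ is \emph{not} continuous, because the set of on-path actions can shrink in the limit, so naively $\tau_i^{\sigma^n}$ need not converge to $\tau_i^\sigma$. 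The saving observation is that the constraint moves in the favorable direction: support is lower-semicontinuous, so any action $a'$ that is on-path for $\sigma$ (some $\tilde t_i\in\supp(\beta_{-i}(t_{-i}))$ has $\sigma_i(\tilde t_i)(a')>0$) is also on-path for $\sigma^n$ for all large $n$, and its denominator $P_{t_{-i}}^n(a')$ stays bounded away from $0$; hence the Bayesian conditionals converge, $P_{t_{-i}}^n(\bar t_i\mid a')\to P_{t_{-i}}(\bar t_i\mid a')$, forcing $\mu^*(t,t_{-i},a')=P_{t_{-i}}(\cdot\mid a')$, which is exactly the consistency requirement for $\sigma$ at $a'$. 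For actions off-path under $\sigma$ there is simply no constraint in ${\cal T}_t^\sigma$, so $\mu^*$ is unconstrained there and (i) holds. In short, because off-path perceptions are free and the constraints at genuinely on-path actions vary continuously, every consistency constraint of the limit game is inherited by $\mu^*$.
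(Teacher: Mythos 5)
Your proof is correct and follows essentially the same route as the paper's: extract a convergent subsequence of consistent perceptions, use the fact that on-path actions of the limit strategy are eventually on-path along the sequence to show the limit perception lies in ${\cal T}_t^\sigma$, pass the optimality inequalities to the limit by continuity, and invoke Claim~\ref{claim_3}. The only difference is cosmetic: you make the identity $\br_t^i(\sigma)=\Delta(B_t^\sigma)$ explicit in order to reduce to pure actions, where the paper appeals to convexity of $\br_t^i(\sigma)$ more tersely.
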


\begin{proof}
Fix a sequence $\sigma^n\rightarrow \sigma$ with $a\in\br_t^i(\sigma^n)$ for all $n$.
Due to the convexity of $\br_t^i(\sigma)$ guaranteed by Claim~\ref{claim_br_convex}, a sufficient condition for upper-hemicontinuity is that $a\in\br_t^i(\sigma)$.

Let $n$ be sufficiently large to guarantee that for all $t_{-i}\in T_{-i}$,
$$\{a\in \cup_{\tilde t_i \in \supp(\beta_{-i}(t_{-i}))}\supp(\sigma_i(\tilde t_i))\} \subseteq \{a\in \cup_{\tilde t_i \in \supp(\beta_{-i}(t_{-i}))}\supp(\sigma^n_i(\tilde t_i))\}.$$
This is possible, since the action set is finite and so the support set for $\sigma^n$ must be equal that of $\sigma$ for large enough $n$.
Consider the sequence of player $i$ perceptions $\left\{\tau_i^{\sigma^n}\right\}_{n\geq 1}$. By Bolzano--Weierstrass, this sequence has
a convergent subsequence $\left\{\tau_i^{\sigma^m}\right\}_{m\geq 1}$, where
$$q_t\eqdef \lim_{m\rightarrow \infty} \tau_i^{\sigma^m}.$$
Observe that since $\tau_i^{\sigma^m}\in{\cal T}_t^{\sigma^n}$ for all $m$,  by the assumption
of $n$ being large enough it must be the case that $q_t\in {\cal T}_t^\sigma$. Since $a\in\hat\br_t^{\sigma_{-i}}(\tau_t^{\sigma^m})$, by continuity $a\in\hat\br_t^{\sigma_{-i}}(q_t)$.
But then Claim~\ref{claim_3} implies that $a\in\br_t^i(\sigma)$, as required.
\end{proof}

\noindent We are finally ready to prove Proposition \ref{prop_existence}:

\begin{proof}
By Lemma~\ref{lemma:uhc}, $\br_t^i(\sigma)$ is upper-hemicontinuous. This implies that the correspondence $\br$
defined as $\br(\sigma)=\left\{\br_{t_i}^i(\sigma)\right\}_{i\in\{1,2\},t_i\in T_i}$, is also convex-valued and
upper-hemicontinuous,
 so by Kakutani's theorem a fixed point $\sigma^*$ exists.
It is easy to verify that $(\sigma^*,\tau_1^{\sigma^*},\tau_2^{\sigma^*})$ is a perception equilibrium.
\end{proof}

\section{Continuity is necessary for equilibrium existence}\label{sec:continuity-necessary}

The following example of a single-player perception game is a (quite artificial) variant of Example~\ref{example:Coventry} which violates Assumption \ref{assumption:continuity} and  has no perception equilibrium.

\begin{example}
The active player, Alice, is of type $t\in\{\ell,r\}$ with $\beta(1)=\beta(0)=0.5$, and must choose between actions $\{L,R\}$, with a utility of 1 if $t=a$ (the uppercase of the type is equal to the action) and 0 otherwise.
However, suppose that Alice incurs an additional disutility as follows. Let $p \in [0,1]$ denote the probability that the inactive player assigns to Alice type $r$.
Then both types incur a disutility of $-2$ whenever $p=0.5$, type $r$ incurs a disutility of $-2$ when $p=1$, and type $\ell$ incurs a disutility of $-2$ whenever $p=0$. The disutility is 0 otherwise.

For any pure and separating strategy profile, both players obtain a utility of at most $1-2$, whereas a deviation would yield 0 and is hence profitable.
Suppose the strategy profile is pure and pooling, with both types playing $L$. Then type $\ell$ obtains utility $-1$, and type $r$ obtains utility $-2$. But regardless of
the perception under action $r$, type $r$ will strictly benefit from deviating. Finally, as in Example~\ref{example:Coventry}, no mixed strategy can be optimal.
\end{example}
Although a proper definition is not given, there is no $\eps$-perception equilibrium in this example for any small enough $\eps$.

Note that the utility functions in the example, while not continuous in the last argument, are lower semicontinuous, and so a weaker assumption to that effect
does not suffice for equilibrium existence. To see that upper semicontinuity alone is also not sufficient, modify the payoffs of the example above as follows: Fix a small
$\eps>0$, and suppose both types incur a disutility of $-2$ whenever $p \in (.5-\eps, .5+\eps)$, type $r$ incurs a disutility of $-2$ when $p > 1-\eps$, and type $\ell$ incurs a disutility of $-2$ whenever $p < \eps$. As above, the disutility is $0$ otherwise. With these modifications, the utility functions are upper semicontinuous. However, the same argument for lack of an equilibrium as in the example still goes through.

\paragraph{Acknowledgements} Gradwohl gratefully acknowledges the support of NSF award \#1216006. Smorodinsky gratefully acknowledges the support of ISF grant 2016301, the joint Microsoft-Technion e-Commerce Lab, Technion VPR grants and the Bernard M. Gordon Center for Systems Engineering at the Technion. We also thank Eddie Dekel, Jana Friedrichsen, Ehud Kalai, Gil Kalai, Birendra Rai, Juuso Valimaki and three anonymous reviewers for insightful comments.

\bibliography{privacy-aware-decision-making-bib}

\end{document}